\documentclass[letterpaper,11pt]{article}

\usepackage[margin=1in]{geometry}  
\usepackage{authblk} 

\usepackage{amsmath,amssymb,amsthm}
\usepackage{tabularx}
\usepackage[algoruled,lined,algonl,procnumbered,nosemicolon]{algorithm2e}
\usepackage{lineno}
\usepackage{subcaption}  
\usepackage{float}
\usepackage{comment} 
\usepackage{enumitem} 
\usepackage[hyphens]{url} 
\usepackage{graphicx,hyperref} 

\usepackage{breakcites} 
\usepackage{tikz}
\usepackage{booktabs}
\usepackage[textsize=scriptsize]{todonotes}
\usepackage{lmodern}

\usepackage{orcidlink}
\usepackage{placeins}  
\usepackage{breakcites}

\colorlet{darkorange}{orange!85!black}
\colorlet{darkblue}{blue!75!black}
\colorlet{darkgreen}{green!50!black}

\newtheorem{theorem}{Theorem}
\newtheorem{corollary}{Corollary}

\newtheorem{lemma}{Lemma}
\newtheorem{observation}{Observation}
\newtheorem{remark}{Remark}
\newtheorem{conjecture}{Conjecture}
\newtheorem{definition}{Definition}

\newtheorem*{claim*}{Claim}

\newcommand{\PWS}{\textnormal{\textsc{Pinwheel Scheduling}}}
\newcommand{\PWPack}{\textnormal{\textsc{Pinwheel Packing}}}
\newcommand{\PWCov}{\textnormal{\textsc{Pinwheel Covering}}}
\newcommand{\VarPWCov}{\textnormal{\textsc{Variable Pinwheel Covering}}}

\newcommand{\tP}{\textnormal{3\textsc{-Partition}}}
\newcommand{\NTDM}{\textnormal{\textsc{Numerical}\ 3\textsc{-Dimensional Matching}}}

\newcommand{\EWPM}{\textnormal{\textsc{Exact Weighted Perfect Matching}}}
\newcommand{\EWPMshort}{\textnormal{\textsc{EWPM}}}

\newcommand{\EM}{\textnormal{\textsc{Exact Matching}}}

\newcommand{\RNTDM}{\textnormal{\textsc{Restricted Numerical}\ 3\textsc{-Dimensional Matching}}}
\newcommand{\RNTDMshort}{\textnormal{\textsc{RN3DM}}}

\newcommand{\kV}{\textnormal{\ensuremath{k}\textsc{-Visits}}}

\newcommand{\ILP}{\textnormal{\textsc{Integer Linear Programming}}}

\newcommand{\kVCov}{\textnormal{\ensuremath{k}\textsc{-Visits Covering}}}
\newcommand{\oneVCov}{\textnormal{1\textsc{-Visit Covering}}}
\newcommand{\twoVCov}{\textnormal{2\textsc{-Visits Covering}}}

\newcommand{\kVPack}{\textnormal{\ensuremath{k}\textsc{-Visits Packing}}}

\newcommand{\twoVPack}{\textnormal{2\textsc{-Visits Packing}}}
\newcommand{\threeVPack}{\textnormal{3\textsc{-Visits Packing}}}

\def\bigO{\ensuremath{\mathcal{O}}\xspace}

\makeatother

\begin{document}
\title{Finite Pinwheel Covering}

\author[1,2]{Sotiris Kanellopoulos \orcidlink{0009-0006-2999-0580} }

\affil[1]{National Technical University of Athens, Greece}
\affil[2]{Archimedes, Athena Research Center, Greece}

\affil[ ]{s.kanellopoulos@athenarc.gr}

\date{}

\maketitle

\begin{abstract}
In perpetual scheduling theory, the Pinwheel Covering problem asks, given~$n$ frequencies~$f_i$, whether there exists an infinite schedule such that every $f_i$ consecutive entries contain at most one occurrence of $i\in [n]$. This models $n$ agents taking turns at executing a job, with a recovery period before working again. Pinwheel Covering is, in a sense, the dual of Pinwheel Packing (also known as Pinwheel Scheduling), which similarly asks for \textit{at least} one occurrence of $i$ in every $f_i$ consecutive entries. The complexity of both problems is a major open question: both are known to be in PSPACE, but PSPACE-hardness remains unknown.

Recently, a finite version of Pinwheel Packing requiring only $k$ occurrences of $i\in [n]$ was introduced by [Kanellopoulos et al., SODA 2026~\cite{kVisits}] and proven to be strongly NP-complete. In this work we introduce $k$-Visits Covering, the analogous finite version of Pinwheel Covering, establishing strong NP-completeness even for $k=2$. As a corollary, we obtain that a generalization of Pinwheel Covering with varying frequencies is strongly NP-hard. To the best of our knowledge, this is the first strong NP-hardness result in the covering setting. We complement these results with a linear-time algorithm for $2$-Visits Covering with two distinct frequencies and a randomized polynomial-time algorithm when the number of distinct frequencies is constant. Lastly, we study the density thresholds of $k$-Visits Covering and prove that no non-trivial density bounds exist, contrasting the finite packing version.
\end{abstract}

\newpage

\section{Introduction}

\PWS\ is a fundamental problem in perpetual scheduling theory, asking whether $n$ tasks with frequencies~$f_i$ can be executed indefinitely, with each task~$i$ being executed at least once every $f_i$ time units. The problem was introduced by Holte, Mok, Rosier, Tulchinsky and Varvel~\cite{Holte_Pinwheel} in 1989, modeling periodic data retrieval from satellites, and has received significant attention since. In particular, a puzzling and long-standing open question regarding the largest value of \emph{density} (sum of inverse frequencies) that guarantees schedulability was answered by Kawamura~\cite{Kawamura_5/6_stoc} (STOC 2024), proving that this value is equal to $5/6$.

The \PWCov\ problem is, in a sense, dual to \PWS, asking for a schedule such that $i$ occurs \emph{at most} once every $f_i$ time units. A natural application for this variant involves $n$ agents taking turns at executing a job, with agent $i$ being able to work at most once every $f_i$ days, and the goal being to \emph{cover} every day with at least one agent. In this paper, we adopt the terminology of~\cite{Kawamura_pinwheel_cover}, referring to \PWS\ as \PWPack, in order to discern between the \emph{packing} and \emph{covering} settings.

The complexity of both \PWPack\ and \PWCov\ remains open. Both problems are known to be in PSPACE~\cite{Holte_Pinwheel,Pinwheel_ISAAC_2025}, but PSPACE-hardness remains open for both. In fact, even NP-hardness was open ever since the introduction of \PWPack\ in 1989, until Kleinberg and Mishra~\cite{kleinberg_mishra_NP_hardness} (FOCS 2026) very recently showed weak NP-hardness for both problems. A recent paper in SODA 2026 by Kanellopoulos, Pergaminelis, Kokkou, Markou and Pagourtzis~\cite{kVisits} defined \kVPack,\footnote{In~\cite{kVisits}, this problem is called \kV. In this work, we refer to it as \kVPack\ for clarity.} a version of \PWPack\ with finite amounts of task executions, and proved strong NP-completeness, contrasting the open status of the infinite version's complexity.

In this work, we define \kVCov, the analogous finite version of \PWCov, and prove strong NP-completeness. As a corollary, we obtain that \PWCov\ becomes strongly NP-hard if each frequency is allowed to change value after a given amount of occurrences in the schedule. We complement these results with some tractable special cases and density thresholds; see Section~\ref{sec:contributions} for an overview of our results.

The motivation for the study of finite pinwheel variants mainly stems from the open complexity of the respective infinite versions (note that strong NP-hardness and PSPACE-hardness remain open for both \PWPack\ and \PWCov, despite the recent breakthrough of Kleinberg and Mishra~\cite{kleinberg_mishra_NP_hardness}). Moreover, PSPACE-hardness proofs for periodic problems often involve modifying the NP-hardness proof of some finite version (cf.~\cite{Los_Alamos_periodic_PSPACE,PSPACE_dynamic,Papadimitriou_book}), although it is unclear if such a framework can be adapted to pinwheel problems. Even disregarding these open questions, finite pinwheel variants may be of interest due to their tractability in cases in which the infinite variants are intractable, as well as their ability to better capture realistic scenarios with finite repetitions.

\subsection{Related Work}

\PWS, also known as \PWPack\ (cf.~\cite{Kawamura_pinwheel_cover,Pinwheel_ISAAC_2025}), was first defined by Holte et al.~\cite{Holte_Pinwheel} in 1989. The problem has since received significant attention in the literature, with a long line of work~\cite{Bar-Noy_0.6,Chan_0.7,Fishburn_density,Gasieniec_towards_5/6} concerning the \emph{density threshold conjecture}~\cite{Chan_conjecture}, i.e., that all instances with sum of inverse frequencies bounded by $5/6$ are schedulable.\footnote{Note that the instance $\{2,3,x\}$ is unschedulable for all $x\in \mathbb{N}$, hence the $5/6$-threshold is tight.} This conjecture was confirmed by Kawamura~\cite{Kawamura_5/6_stoc} (STOC 2024) through a computer-assisted proof that involves verifying the schedulability of a finite set of instances. To this day, no proof exists for this density threshold without the use of a computer. The density threshold of $5/6$ has also been proven for a variant with real periods when there are up to three distinct frequencies, and is conjectured to hold in general~\cite{SOFSEM_real_periods}.
Another direction that has received attention involves the complexity of \PWPack, with Jacobs and Longo~\cite{Jacobs_Window_Scheduling_Complexity} and later Kobayashi and Lin~\cite{Pinwheel_ISAAC_2025} (ISAAC 2025) indicating that \PWPack\ is likely intractable, despite NP-hardness remaining open for over three decades. Recently, Kleinberg and Mishra~\cite{kleinberg_mishra_NP_hardness} finally showed weak NP-hardness. Strong NP-hardness and PSPACE-completeness remain open, although a generalization with task durations~\cite{Feinberg_Generalized_Pinwheel,SOFSEM_Kusano_pinwheel_durations} and a generalization in edge-weighted graphs~\cite{PSPACE_UAV} are known to be strongly NP-hard and PSPACE-complete respectively.

An optimization version of \PWPack\ known as \emph{Bamboo Garden Trimming} (BGT) has also been extensively studied. A long line of work~\cite{Bamboo_second,Bamboo_approx_2,Bamboo_first,Bamboo_approx_1,Kawamura_5/6_stoc} shows various constant approximations for this problem, with Mishra~\cite{Patrolling_SODA_2026} (SODA 2026) establishing a $9/7$-approximation. A combinatorial variant of BGT has been studied by Mendoza-Cadena, Merino, Nielsen and Schewior~\cite{Schewior_Combinatorial_Perpetual_Scheduling_ICALP} (ICALP 2026). Lastly, Kleinberg and Mishra~\cite{kleinberg_mishra_NP_hardness} showed a PTAS for BGT.

\PWCov\ is less studied than \PWPack, although some results and open questions are shared between the two (cf.~\cite{Pinwheel_ISAAC_2025}); notably, strong NP-hardness and PSPACE-hardness are open for both problems, while membership in PSPACE is known for both. \PWCov\ was first introduced by Kawamura and Soejima~\cite{Kawamura_Soejima_point_patrolling} as \emph{Point Patrolling}. Kawamura, Kobayashi and Kusano~\cite{Kawamura_pinwheel_cover} (CIAC 2025) studied density thresholds for \PWCov\ and proved NP-hardness for a variant with exact frequencies. Mishra~\cite{Patrolling_SODA_2026} (SODA 2026) and Kawamura and Kobayashi~\cite{Kawamura_Kobayashi_density_covering} (ESA 2026) concurrently proved a tight density threshold of $\sum_{i=0}^\infty 1/(2^i+1)\approx 1.264$ for \PWCov, i.e., that every \PWCov\ instance with density no less than this value is schedulable.

Kanellopoulos et al.~\cite{kVisits} (SODA 2026) defined \kVPack\ as a finite version of \PWPack\ and showed that (i)~\twoVPack\ is strongly NP-complete, (ii)~\PWPack\ becomes strongly NP-hard if frequencies are allowed to change after a given amount of visits to the respective task, and (iii)~\twoVPack\ is tractable when all frequencies are distinct, or when there are up to two distinct frequencies, or when the input is sparse in a sense. Kanellopoulos, Mitropoulos, Pergaminelis and Tolias~\cite{ICALP_kVisits} (ICALP 2026) further showed that (i)~\twoVPack\ remains strongly NP-complete for maximum multiplicity $2$ (despite being in P for maximum multiplicity $1$), (ii)~\twoVPack\ admits a randomized polynomial-time algorithm when the number of distinct frequencies is constant, and (iii)~\twoVPack\ is always schedulable for inputs with density at most $\sqrt{2}-1/2\approx 0.9142$. Note that this density threshold is larger than the $5/6\approx 0.8333$ threshold proven for the infinite version~\cite{Kawamura_5/6_stoc}, although it does not seem to be tight. As $k \to \infty$, this density threshold of \kVPack\ approaches $5/6$~\cite{ICALP_kVisits}.

\subsection{Our Contributions}\label{sec:contributions}

In this work, we define \kVCov\ as a version of \PWCov\ with finite repetitions, analogous to the finite packing variant of~\cite{kVisits}. Formally, the problem asks to cover $kn$ time slots with $k$ occurrences (visits) of each of the numbers in $\{1,\ldots,n\}$, such that consecutive visits of $i\in \{1,\ldots,n\}$ are at least $f_i$ time slots apart.

Our main result is that \twoVCov\ is strongly NP-complete, through a reduction from \RNTDM\ (\RNTDMshort)~\cite{Flow_shop_Yu}. As a corollary from this, we obtain that \PWCov\ becomes \emph{strongly} NP-hard when frequencies are allowed to change their values after a given amount of visits. To the best of our knowledge, this is the first strong NP-hardness result for a \PWCov\ variant.

A crucial component for our hardness proof for \twoVCov\ involves a \emph{disconnection} property (Lemma~\ref{lem:disconnection_property}), roughly stating that the positions of first visits and the positions of second visits are distinct for all instances. This property is analogous to the one of \twoVPack~\cite{kVisits}, albeit simpler. Such a property essentially transforms the problem into a numerical matching variant with inequalities, rendering reductions from and to numerical matching problems possible.

We complement our hardness results with a linear-time algorithm for \twoVCov\ with up to two distinct frequencies, as well as a proof that \twoVCov\ is in the complexity class~$\mathrm{RP}$ (randomized polynomial-time) when the number of distinct frequencies is constant. We achieve the latter through a reduction to \EWPM\ (see Def.~\ref{def:EWPM}).

Lastly, we prove that no non-trivial density threshold exists for \kVCov, contrasting the results of Kanellopoulos et al.~\cite{ICALP_kVisits} (ICALP 2026) for \kVPack. Specifically, we prove (for all $k\geq 2$) that all \kVCov\ instances with density less than~$1$ are unschedulable, and that unschedulable instances exist even for arbitrarily large density. Note that the former is not immediate for the finite version (despite being trivial for \PWCov~\cite{Kawamura_pinwheel_cover}); we prove it by carefully applying the arithmetic-harmonic mean inequality to obtain a density bound.

\section{Preliminaries}

\subsection{Common notation}

Throughout the paper, we use the notations $[n]=\{1,\ldots,n\}$ and $[m,n]=\{m,\ldots,n\}$, for $m,n\in \mathbb{N}$ with $m\leq n$. For a (multi)set $A$ and $c \in \mathbb{N}$, we use the notations $A + c = \{a+c\mid a\in A\}$ and $A - c = \{a-c\mid a\in A\}$.\footnote{If $A$ is a multiset, we assume this operation preserves the cardinality of each element, e.g., $\{3,4,4\}+2=\{5,6,6\}$.}
If $c\geq \max(A)$ we may also use the notation $c-A= \{c-a\mid a\in A\}$. Additionally, for a (multi)set $A$, we use the notation $\sum(A)=\sum_{a\in A}a$. Throughout the paper, whenever a (multi)set is given as input to a computational problem, we assume that it is sorted in non-decreasing order; thus, any $\bigO( n\log n)$ time factors that would be added to algorithms for sorting are omitted.

\subsection{Problem definitions}

First, let us define \PWCov, which was originally introduced as \textit{point patrolling} in~\cite{Kawamura_Soejima_point_patrolling}, as well as the respective version with finite occurrences (visits) that we will study in this work.

\begin{definition}[\PWCov]\label{def:PWCov}
    Given a (multi)set of positive integers (frequencies) $F=\{f_1,\ldots,f_n\}$, the \PWCov\ problem asks whether there exists an infinite schedule visiting one $i\in [n]$ per time unit, such that every $f_i$ consecutive entries contain at most one visit of~$i$, for all $i\in [n]$.
\end{definition}

\begin{definition}[\kVCov]\label{def:kVCov}
    Given a (multi)set of positive integers (frequencies) $F=\{f_1,\ldots,f_n\}$, the \kVCov\ problem asks whether there exists a schedule of length $nk$ visiting one $i\in [n]$ per time unit, such that each $i\in [n]$ is visited exactly $k$ times and every $f_i$ consecutive entries contain at most one visit of $i$, for all $i\in [n]$.
\end{definition}

\kVCov\ is defined in a similar manner to the \kVPack\ problem defined in~\cite{kVisits} as a finite version of \PWPack.
It is immediate from Definition~\ref{def:kVCov} that the answer to the \oneVCov\ problem is always yes: since the schedule contains exactly one occurrence of each $i\in [n]$, the desired condition for the consecutive entries is always satisfied. As such, this work will primarily focus on \twoVCov.

\section{Computational complexity of \kVCov}

In this section we prove that \twoVCov\ is strongly NP-complete via a reduction from \RNTDM\ (\RNTDMshort)~\cite{Flow_shop_Yu}. As a corollary, we transfer strong NP-hardness to a generalization of \PWCov\ where each frequency $f_i$ varies depending on the amount of times the respective number $i\in [n]$ has occurred in the schedule.

\subsection{Disconnection property}

We start by proving a property that disconnects the schedule positions in which first and second visits have to be placed in \twoVCov, inspired by the disconnection property proved for \twoVPack\ in~\cite{kVisits}. This essentially transforms \twoVCov\ into a numerical matching variant, paving the way for proving NP-hardness.

\begin{lemma}[Disconnection property]\label{lem:disconnection_property}
    A \twoVCov\ instance $F=\{f_1,\ldots,f_n\}$ admits a schedule if and only if it admits a schedule such that the first visits of all $i\in [n]$ are placed in some permutation of the positions $1,\ldots,n$. (Equivalently: the second visits of all $i\in [n]$ are placed in some permutation of the positions $n+1,\ldots,2n$.)
\end{lemma}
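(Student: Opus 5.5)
The plan is an exchange argument. Only the ``only if'' direction needs work, since a schedule with the stated structure is in particular a schedule. Start from any valid schedule $S$ of length $2n$. Call a position a \emph{first} slot if it holds a first visit and a \emph{second} slot otherwise. There are exactly $n$ slots of each kind. If the first slots are not exactly $1,\ldots,n$, then some second slot $p$ comes before some first slot $q$. Choose such a pair adjacent in the schedule. This is possible because the sequence of slot types is not of the form $F^nS^n$, so somewhere a second slot is immediately followed by a first slot, giving $q=p+1$.

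Let $j$ be the task at $p$ (its second visit) and $i$ the task at $p+1$ (its first visit). These are different tasks, because the first visit of $j$ comes before $p$. We swap the two entries and check that validity is preserved.
\begin{itemize}
\item Task $i$ moves to position $p$. Its second visit stays where it was, so its gap grows by one, and a larger gap still satisfies the constraint.
\item Task $j$ moves to position $p+1$, after its first visit, so its gap also grows by one.
\end{itemize}
No other task is affected, and each task still occurs exactly twice. Hence the swapped schedule is valid.

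Each swap moves a first visit one step earlier and a second visit one step later. The potential $\sum(\text{positions of first visits})$ therefore drops by exactly $1$ per swap, and it is bounded below by $1+\dots+n$. So the process terminates. When it stops, no second slot is immediately followed by a first slot, which means all first slots are $1,\ldots,n$ and all second slots are $n+1,\ldots,2n$. The parenthetical equivalence is then immediate by counting.

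The only delicate point is making sure the swap never shortens a gap. This works precisely because the covering constraint is a \emph{lower} bound on the distance between the two visits. Moving a first visit earlier, or a second visit later, can only increase that distance. This is why the argument is simpler than its packing analogue.
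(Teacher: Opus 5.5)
Your proof is correct and is essentially the paper's exchange argument. The paper swaps a first visit at some position $p>n$ directly with a second visit at some position $q\le n$, which fixes one violating task per swap and so needs at most $n$ swaps. You instead swap adjacent second/first pairs and use the sum of first-visit positions as a potential. The key observation is the same in both: moving a first visit earlier or a second visit later only enlarges gaps.
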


\begin{proof}
    The converse direction is trivial. For the forward direction, assume that $F$ admits a schedule~$S$ violating the desired property, i.e., there exists some $i\in [n]$ whose first visit is placed in position $p>n$ in~$S$. This implies that there exists some $j \in [n]$, $j\neq i$, whose second visit is placed in position $q\leq n$ in~$S$. We construct a schedule $S'$ by swapping the contents of positions $p,q$ in~$S$. This causes the first visit of~$i$ to be placed earlier and the second visit of~$j$ to be placed later; hence, it increases both the distance between the two visits of~$i$ and the distance between the two visits of~$j$, without affecting the placements of any numbers other than~$i,j$. Since~$S$ is a feasible schedule for instance~$F$, we infer that~$S'$ is also a feasible schedule for~$F$.

    By repeatedly applying the described transformation\footnote{Notice that the described transformation does not affect the placements of numbers other than the violators $i,j$. Thus, applying this transformation at most $n$ times causes the desired property to be satisfied.} as long as there exists some $i\in [n]$ violating the desired property, we can obtain a schedule in which the first visits of all $i\in [n]$ are placed in (some permutation of) the positions $1,\ldots,n$, proving the lemma.
\end{proof}

With Lemma~\ref{lem:disconnection_property} we can redefine \twoVCov\ as a numerical matching variant as follows.

\begin{definition}[\twoVCov]\label{def:twoVCov}
    Given a (multi)set of positive integers (frequencies) $F=\{f_1,\ldots,f_n\}$, the \twoVCov\ problem asks whether there exists a subset $M$ of $F\times [n]\times [n+1,\ 2n]$ such that:
    \begin{itemize}
        \item Every $f_i \in F$, $b \in [n]$, $c \in [n+1,\ 2n]$ occurs exactly once in~$M$.
        \item For every triplet $(f,b,c)\in M$, it holds that $f+b \leq c$.
    \end{itemize}
\end{definition}

Intuitively, Definition~\ref{def:twoVCov} states that each frequency has to be matched with a position in $[n]$ for its first visit and a position in $[n+1,\ 2n]$ for its second visit, such that the distance between the two visits is at least the frequency. Observe that, by Lemma~\ref{lem:disconnection_property}, Definition~\ref{def:twoVCov} is equivalent to Definition~\ref{def:kVCov} for $k=2$.

\subsection{Reducing \RNTDMshort\ to \twoVCov}

We will now reduce \RNTDM\ (\RNTDMshort) to \twoVCov\ in order to prove strong NP-completeness for the latter. First, we formally define the \RNTDMshort\ problem.\footnote{This problem is called ``restricted'' because two out of three input sets are fixed to $[n]$, instead of being arbitrary as in regular \NTDM.}

\begin{definition}[\RNTDMshort~\cite{Flow_shop_Yu}]\label{def:RN3DM}
    Given a (multi)set of positive integers $A=\{a_1,\ldots,a_n\}$ and an integer $\sigma$ such that \[\sum(A) + n(n+1) = n\sigma,\] the \RNTDM\ (\RNTDMshort) problem asks whether there exists a subset $M$ of $A\times [n]\times [n]$ such that:
    \begin{itemize}
        \item Every $a_i \in A$, $b \in [n]$, $c \in [n]$ occurs exactly once in~$M$.
        \item For every triplet $(a,b,c)\in M$, it holds that $a+b+c = \sigma$.
    \end{itemize}
\end{definition}

Note that $\sigma$ can be inferred from $A$ and can thus be omitted from the problem's input.

\begin{theorem}[Yu, Hoogeveen, Lenstra 2004~\cite{Flow_shop_Yu}]\label{thrm:RN3DM_hardness}
    \RNTDMshort\ is strongly NP-complete.
\end{theorem}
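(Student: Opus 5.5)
Membership in NP is immediate: a certificate is the set $M$ of $n$ triplets, and we can check in polynomial time that every $a_i$, every $b\in[n]$ and every $c\in[n]$ occurs exactly once and that each triplet sums to $\sigma$. For strong NP-hardness, the plan is a pseudo-polynomial reduction from \NTDM, which is strongly NP-complete (Garey and Johnson). An instance of \NTDM\ consists of multisets $X=\{x_1,\ldots,x_m\}$, $Y=\{y_1,\ldots,y_m\}$, $Z=\{z_1,\ldots,z_m\}$ and a target $T$, and asks for a perfect matching into triplets each summing to $T$. The difficulty in \RNTDMshort\ is that the second and third coordinates must be \emph{exactly} $[n]$. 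The strategy is therefore to embed $Y$ and $Z$ as sparse subsets of $[n]$, and to fill every unused index with dummy elements of $A$ that can only be consumed by the filler indices.

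\textbf{Construction.} Choose a modulus $K$ polynomial in $m$, say $K\ge 8$, and a scale $L$. Encode $y_j$ as the index $b_j = K y_j + 1$ and $z_j$ as the index $c_j = K z_j + 2$, so real $b$-indices have residue $1$ and real $c$-indices have residue $2$ modulo $K$. Encode each $x_i$ as $a_i = \sigma - K(T - x_i) - 3$; this is positive once $\sigma$ is large. With this choice, a triplet $(a_i,b_j,c_k)$ sums to $\sigma$ exactly when $x_i+y_j+z_k=T$. Take $n$ a multiple of $K$ exceeding all encoded indices, with $n$ polynomially bounded because all values are unary-bounded. Every index of $[n]$ not of the form $b_j$ (respectively $c_k$) is a filler. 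Pair filler $b$-indices with filler $c$-indices in a fixed canonical way, for instance by reversal within complementary residue classes. For each such pair $(b,c)$, add a dummy element $\sigma-b-c$. Finally, set $\sigma$ from the required identity $\sum(A)+n(n+1)=n\sigma$, and check that the balancing works: it holds automatically because every index is used exactly once in the intended solution. If needed, shift all values by a common constant so that every dummy is positive.

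\textbf{Correctness.} Given a solution to \NTDM, the intended matching is immediate: real triplets plus the canonical dummy pairs. The converse is the main obstacle, since a dummy could a priori be matched to a real index, or a real $a_i$ to filler indices. Residues modulo $K$ must rule this out. Every real $a_i$ has residue $\sigma-3$, and the dummies must be designed so that their residues together with the residues of real indices never reach $\sigma$, except in the intended pattern. Concretely, I would restrict filler $b$-indices to residues in one set $R_b$ and filler $c$-indices to residues in another set $R_c$. The remaining residues cannot simply be left out of $[n]$, since $[n]$ is an interval; instead, the fillers occupying them must be handled by a counting argument. A cleaner route is an additional large-scale separation: assign real elements values in one magnitude window and dummies in another (using a factor $L\gg n$ on top of the residues), so that a dummy added to a real index overshoots or undershoots $\sigma$. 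Then a counting argument shows that exactly $m$ triplets use real $a$'s, and these must consume exactly the real $b$- and $c$-indices. Getting this separation consistent with the rigid interval $[n]$ is the delicate part. If the direct residue scheme fails, I would fall back on the original Yu--Hoogeveen--Lenstra gadget from 3-\textsc{Partition}.

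Throughout, all numbers remain polynomial in the unary size of the \NTDM\ instance. The reduction is therefore pseudo-polynomial-preserving, which transfers strong NP-hardness and proves Theorem~\ref{thrm:RN3DM_hardness}.
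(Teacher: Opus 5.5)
Your NP-membership argument is fine. The hardness part has a genuine gap: the converse direction, that a solution of the constructed instance yields a solution of the \NTDM\ instance, is never established, and both mechanisms you sketch for it fail.

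Residues modulo $K$ cannot tell real indices from fillers. Since $[n]$ is a full interval, every index $Ky'+1$ with $y'\notin Y$ is a filler with exactly the residue of the real $b$-indices, and likewise $Kz'+2$ for $c$. More generally, a real element $a_i$ is completed by \emph{any} pair $(b,c)\in[n]\times[n]$ with $b+c=\sigma-a_i=K(T-x_i)+3$, for any residues $r_b+r_c\equiv 3 \pmod K$. Symmetrically, a dummy $\sigma-b-c$ can be re-paired with any pair of indices having the same sum, real or filler.

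In other words, \RNTDMshort\ only asks for a permutation $\pi$ of $[n]$ whose multiset $\{b+\pi(b)\}$ equals the prescribed multiset $\{\sigma-a : a\in A\}$. Nothing in your construction rules out a mixed solution in which some $a_i$ uses filler indices $Ky'+1, Kz'+2$ with $y'+z'=T-x_i$, $y'\notin Y$. The promised ``counting argument'' is exactly what is missing.

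The ``magnitude window'' fallback is impossible outright. In any non-trivial instance every element of $A$ lies in $[\sigma-2n,\sigma-2]$ (cf.\ Lemma~\ref{lem:RN3DM_minimum_element}). So all of $A$ lives in a window of width less than $2n$ and cannot be split into scales separated by a factor $L\gg n$.

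There is also a smaller issue. \NTDM\ allows repeated values, and $y_j=y_{j'}$ forces $b_j=b_{j'}$, although each index of $[n]$ may be used only once. You would need a variant with pairwise distinct values in $Y$ and in $Z$, together with a separate justification of its strong NP-hardness.

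Forcing the real numbers onto the real indices despite the rigid interval $[n]$ is precisely the hard core of this theorem. Your closing sentence (``fall back on the original Yu--Hoogeveen--Lenstra gadget'') effectively concedes this. The paper does not reprove the statement: it imports it from~\cite{Flow_shop_Yu}, whose proof is an involved reduction from \tP\ that pads the instance with duplicate numbers to fill $[n]$ rigidly, as the paper's conclusion remarks. As it stands, your proposal is a plan whose decisive step is absent. You would either have to supply a genuine rigidity argument for the filler structure or cite~\cite{Flow_shop_Yu} as the paper does.
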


Figures~\ref{fig:RN3DM} and~\ref{fig:2Vcov} show a sketch of our reduction (Theorem~\ref{thrm:2Vcov_hardness}) with an example. Note that, since \RNTDMshort\ is \emph{strongly} NP-complete, $\sigma$ can be assumed to be polynomial in $n$ for our reduction; this is crucial for proving that the reduction runs in polynomial time, since the instance is padded with an amount of elements which is dependent on $\sigma$.

\begin{figure}[ht]
     \centering
     \begin{subfigure}{0.46\textwidth}
         \centering
         \includegraphics[width=\linewidth, page=1]{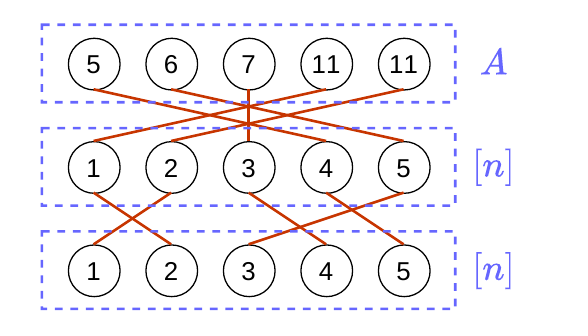}
         \caption{\RNTDMshort\ instance and solution.}
         \label{fig:RN3DM_1}
     \end{subfigure}\hfill
     \begin{subfigure}{0.54\textwidth}
         \centering
         \includegraphics[width=\linewidth, page=2]{RN3DM_covering.drawio.pdf}
         \caption{Equivalent formulation with target sums.}
         \label{fig:RN3DM_2}
     \end{subfigure}
     \caption{The \RNTDMshort\ instance $A=\{5,6,7,11,11\}$ and its solution (a), matching all numbers in triplets with sum $\sigma = 14$. In (b), we show an equivalent formulation with the third set being used as target sums, e.g., $11+2=13$ (cf.~\cite{temporal_path_cover,telephone_broadcasting_NMTS} for reductions involving such numerical matching variants). Observe that the solution is preserved (red lines).}
     \label{fig:RN3DM}
\end{figure}

\begin{figure}[ht]
    \centering
    \includegraphics[width=0.85\linewidth, page=3]{RN3DM_covering.drawio.pdf}
    \caption{An example of our reduction in Theorem~\ref{thrm:2Vcov_hardness}, from the \RNTDMshort\ instance $A=\{5,6,7,11,11\}$ of Figure~\ref{fig:RN3DM} with $n=5$, $\sigma = 14$, to the \twoVCov\ instance $F$ of size~$n'$. The numbers in~$A$ are decreased by $d=\sigma-2n-1=3$, while the numbers in the second row are increased by~$3$ in order to preserve the solution (red lines).  Recall that by Def.~\ref{def:twoVCov}, the second and third row must contain the numbers in $[n']$ and $[n'+1,\ 2n']$ respectively. Hence, we pad the set of frequencies with numbers of the form $2i-1$, forcing a compulsory matching with the unmatched numbers of $[n']$ and $[n'+1,\ 2n]$ in the second and third row (vertical gold lines). A crucial component in the correctness of the reduction is that, when the sum of the first two rows matches the sum of the third row, a perfect matching in triplets $(a,b,c)$ with $a+b\leq c$ is equivalent to a perfect matching in triplets with $a+b = c$.}
    \label{fig:2Vcov}
\end{figure}

The following auxiliary lemmas will be useful for our reduction.

\begin{lemma}\label{lem:RN3DM_addition}
    Let $A$ be a (multi)set of positive integers and $x\in \mathbb{Z}$ with $x \geq -\min(A)+1$.\footnote{The purpose of this inequality is to ensure that $A+x$ consists of positive integers.}
    Then, the \RNTDMshort\ instances $A$ and $A+x$ are equivalent.
\end{lemma}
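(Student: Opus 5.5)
The plan is to exhibit an explicit bijection between solutions of the two instances, obtained by shifting the target sum. First I will fix the target sums. For the instance $A$, the target is $\sigma$ with $\sum(A)+n(n+1)=n\sigma$. For $A+x$ we have $\sum(A+x)=\sum(A)+nx$, hence
\[
\sum(A+x)+n(n+1)=n\sigma+nx=n(\sigma+x).
\]
So the instance $A+x$ is well defined (as an \RNTDMshort\ instance) with target $\sigma'=\sigma+x$. The hypothesis $x\geq -\min(A)+1$ guarantees that $A+x$ consists of positive integers, as the definition requires.

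Next I will transfer solutions in both directions. Let $M\subseteq A\times[n]\times[n]$ be a solution for $A$. Define
\[
M'=\{(a+x,b,c)\mid (a,b,c)\in M\},
\]
where multiset elements are tracked by index, i.e., $a_i\mapsto a_i+x$. Each element of $A+x$ occurs exactly once in $M'$, because each element of $A$ occurs exactly once in $M$ and the map $a_i\mapsto a_i+x$ is a bijection between the indexed elements. The second and third coordinates are untouched, so every $b\in[n]$ and every $c\in[n]$ still occurs exactly once. For each triplet we have $(a+x)+b+c=\sigma+x=\sigma'$, so $M'$ is a solution for $A+x$.

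Conversely, given a solution for $A+x$, I subtract $x$ from the first coordinates. The same argument, with $-x$ in place of $x$, shows that this yields a solution for $A$. Hence $A$ is a yes-instance if and only if $A+x$ is.

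I do not expect a genuine obstacle here. The only points needing care are the multiset bookkeeping when an element of $A$ has multiplicity greater than one, and confirming that the implicit $\sigma$ of the new instance is exactly $\sigma+x$ and is still an integer. Both follow from the computation $\sum(A+x)=\sum(A)+nx$.
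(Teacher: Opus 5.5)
Your proof is correct and follows the same approach as the paper: shift the first coordinate of every triplet by $x$, and back by $-x$ for the converse. This keeps the matching perfect and moves the common triplet sum from $\sigma$ to $\sigma+x$. You are more explicit than the paper in checking that $\sigma+x$ satisfies the defining identity for $A+x$ and in handling repeated elements of the multiset, but the argument is the same.
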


\begin{proof}
    By definition, $A$ admits a solution iff there exists a (perfect) matching $M\subseteq A \times [n] \times [n]$ such that all triplets in $M$ have equal sums. It is clear that by adding $x$ to all elements in $A$ the existence of such a solution is preserved, and vice versa by subtracting $x$ from all elements in~$A+x$.
\end{proof}

\begin{lemma}\label{lem:RN3DM_minimum_element}
    For every non-trivial \RNTDMshort\ instance it holds that $\min(A) \geq \sigma - 2n$.
\end{lemma}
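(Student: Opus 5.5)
The lemma says that in every non-trivial (i.e., solvable) \RNTDMshort\ instance, every element $a\in A$ satisfies $a\geq \sigma-2n$. I would interpret ``non-trivial'' as ``admits a solution'' (a yes-instance). If an instance does not satisfy the bound, we can decide it as a no-instance in polynomial time, so such instances are trivial for the purposes of the reduction.

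The argument reads the bound off a single triplet of a solution. Assume $A$ admits a solution $M\subseteq A\times[n]\times[n]$ as in Definition~\ref{def:RN3DM}. Fix any element $a_i\in A$. Since every element of $A$ occurs exactly once in $M$, there is a triplet $(a_i,b,c)\in M$ with $b,c\in[n]$ and $a_i+b+c=\sigma$. Because $b\leq n$ and $c\leq n$, we get
\[
a_i=\sigma-b-c\geq \sigma-2n.
\]
This holds for every $a_i$, so $\min(A)\geq\sigma-2n$.

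To make the ``non-trivial'' qualifier rigorous, I would state explicitly that an instance with $\min(A)<\sigma-2n$ is a no-instance: its minimum element cannot appear in any triplet summing to $\sigma$, since the other two coordinates contribute at most $2n$. Such instances can be recognized in linear time and mapped to a fixed unschedulable \twoVCov\ instance; one example is $F=\{2\}$, since its two visits must be adjacent at distance $1<2$. Hence the reduction may assume the bound holds.

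There is no real obstacle here. The only point requiring care is this definitional convention about ``non-trivial'', which is needed because the lemma is later used to guarantee that $d=\sigma-2n-1$ keeps the shifted set $A-d$ positive via Lemma~\ref{lem:RN3DM_addition}. Note that $\min(A)-d\geq 1$ follows exactly from $\min(A)\geq \sigma-2n$, so the lemma gives precisely what that shift needs.
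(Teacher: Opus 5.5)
Your proof is correct and is essentially the paper's argument. The paper simply observes that if $\min(A)+2n<\sigma$, then no $(b,c)\in[n]\times[n]$ gives $\min(A)+b+c=\sigma$, so the instance is a trivial no-instance; this is exactly your contrapositive paragraph, and your direct reading off a solution triplet uses the same bound $b+c\leq 2n$ (your stronger reading of ``non-trivial'' as ``yes-instance'' loses nothing, since the contrapositive is also stated).
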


\begin{proof}
    It suffices to observe that every \RNTDMshort\ instance with $\min(A) + 2n < \sigma$ is a trivial no-instance, since there exist no numbers $(b,c)\in [n]\times [n]$ such that $\min(A)+b+c = \sigma$.
\end{proof}

\begin{lemma}\label{lem:RN3DM_sigma}
    \RNTDMshort\ is strongly NP-complete even when restricted to instances such that $\sigma \geq 2n +1$.
\end{lemma}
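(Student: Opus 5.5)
We reduce from the unrestricted problem, which is strongly NP-complete by Theorem~\ref{thrm:RN3DM_hardness}, by shifting $A$ upward via Lemma~\ref{lem:RN3DM_addition}. Adding $x$ to every element of $A$ increases the sum $\sum(A)$ by $nx$. So the identity $\sum(A+x)+n(n+1)=n\sigma'$ holds with $\sigma'=\sigma+x$: the target sum shifts by exactly $x$. Membership in NP is inherited directly from the unrestricted problem, since the restricted problem is a special case. Only hardness needs an argument.

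Given an arbitrary instance $(A,\sigma)$, we proceed as follows.

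\begin{enumerate}
\item If $\sigma\geq 2n+1$ already, we output the instance unchanged.
\item Otherwise, we set $x=2n+1-\sigma>0$ and output $(A+x,\ \sigma+x)$. Here $\sigma+x=2n+1$.
\end{enumerate}

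Since $x>0$, the condition $x\geq -\min(A)+1$ of Lemma~\ref{lem:RN3DM_addition} holds trivially, because $\min(A)\geq 1$. The elements of $A+x$ are therefore positive integers. Lemma~\ref{lem:RN3DM_addition} then gives that the two instances are equivalent. Concretely, a triplet $(a,b,c)$ with $a+b+c=\sigma$ corresponds to the triplet $(a+x,b,c)$ with sum $\sigma+x$, and conversely.

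It remains to check that the reduction preserves strong NP-hardness, i.e.\ that numbers stay polynomially bounded in unary. This is the only point requiring care, and it is easy. We may assume $\sigma$ is a nonnegative quantity of size at least the trivial bound, because $\sigma=(\sum(A)+n(n+1))/n\geq n+1$. Hence $x\leq 2n+1-(n+1)=n$. The new numbers are thus at most $\max(A)+n$, which is polynomial in the unary input size, and the map is computable in polynomial time. (Alternatively, one can simply take $x=n$ uniformly, since $\sigma\geq n+1$ gives $\sigma+n\geq 2n+1$.)

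Combining these steps, the restricted problem is strongly NP-hard, and together with NP membership it is strongly NP-complete.
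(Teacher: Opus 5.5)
Your proof is correct and follows essentially the paper's route. You shift $A$ by a constant via Lemma~\ref{lem:RN3DM_addition}, observe that $\sigma$ shifts by the same amount, and use $\sigma \geq n+1$ to keep the shift at most $n$. Your parenthetical uniform shift $x=n$ is exactly what the paper does, and the adaptive choice $x = 2n+1-\sigma$ is only a cosmetic variant.
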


\begin{proof}
    Let $A=\{a_1,\ldots,a_n\}$.
    By Lemma~\ref{lem:RN3DM_addition}, $A$ and $A'=A+n$ are equivalent \RNTDMshort\ instances. For~$A'$ we have $\sigma' = \left(\sum(A')/n\right) +n+1\geq 2n+1$. Hence, every \RNTDMshort\ instance can be transformed in polynomial time into an equivalent one that satisfies the desired property, while preserving polynomial values. In conjunction with Theorem~\ref{thrm:RN3DM_hardness}, this concludes the proof.
\end{proof}

We are now ready to present our main reduction.

\begin{theorem}[Reduction]\label{thrm:2Vcov_hardness}
    \twoVCov\ is strongly NP-complete.
\end{theorem}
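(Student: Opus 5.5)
The plan is a pseudo-polynomial reduction from \RNTDMshort, which is strongly NP-complete by Lemma~\ref{lem:RN3DM_sigma}, to \twoVCov\ in the matching form of Definition~\ref{def:twoVCov} (valid by Lemma~\ref{lem:disconnection_property}). Membership in NP is immediate: a certificate is a schedule of length $2n$. For hardness, take an \RNTDMshort\ instance $A$ with target $\sigma\geq 2n+1$ and, by strong NP-completeness, $\sigma$ bounded by a polynomial in $n$. Assume it is non-trivial, so $\min(A)\geq \sigma-2n$ by Lemma~\ref{lem:RN3DM_minimum_element}.

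\textbf{Construction.} Rewrite $a+b+c=\sigma$ as $a+b=\sigma-c$, with targets $\sigma-c\in[\sigma-n,\sigma-1]$. Set $d=\sigma-2n-1\geq 0$ and $n'=n+d$. Define
\[
F = (A-d)\ \cup\ \{\,2n'+1-2j \mid j\in[d]\,\}.
\]
The first part has positive entries since $a-d\geq \sigma-2n-d=1$. The second part consists of the $d$ odd padding numbers $2n+1,\ldots,2n'-1$. Then $|F|=n'$, which is polynomial because $d<\sigma$. The intended correspondence is as follows. An original triple $(a,b,c)$ maps to $(a-d,\ b+d,\ \sigma-c)$, where $b+d\in[d+1,n']$ and $\sigma-c\in[n'+1,n'+n]$; this triple satisfies $(a-d)+(b+d)=\sigma-c$. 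Each padding frequency $2n'+1-2j$ is matched with the leftover first position $j\in[d]$ and the leftover second position $2n'+1-j\in[n'+n+1,2n']$. Every triple in this matching holds with equality, so a yes-instance of \RNTDMshort\ maps to a yes-instance of \twoVCov.

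\textbf{Converse.} The key observation is a sum identity. Using $\sum(A)=n\sigma-n(n+1)$, I will verify by routine algebra that
\[
\sum(F)+\sum[n'] = \sum[n'+1,2n'], \quad\text{that is,}\quad \sum(F)=n'^2.
\]
Given any valid matching $M$, the slacks $c-f-b$ are nonnegative and sum to $0$, so every triple is tight: $f+b=c$.

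Next I force the padding by induction on $j=1,\ldots,d$. Every element of $A-d$ is at most $\sigma-2-d=2n-1$, so it is smaller than every padding number. Suppose the padding frequencies larger than $2n'+1-2j$ are already matched to $(j',\,2n'+1-j')$ for all $j'<j$. Then every remaining first position is at least $j$ and every remaining second position is at most $2n'+1-j$. Hence the frequency $2n'+1-2j$, which needs $c-b=2n'+1-2j$, can only use $b=j$ and $c=2n'+1-j$.

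After the padding is removed, the remaining triples are tight triples $(a-d,\,b',\,c')$ with $b'\in[d+1,n']$ and $c'\in[n'+1,n'+n]$. Setting $b=b'-d$ and $c=\sigma-c'$ gives a bijection onto $[n]\times[n]$ with $a+b+c=\sigma$, which is a solution of $A$.

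The main obstacle is this converse direction, and specifically making the sum identity exact: the inequalities $f+b\leq c$ only collapse to equalities if the totals match precisely. Once the identity holds, the peeling argument for the padding is straightforward.
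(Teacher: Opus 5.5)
Your proof is correct and is essentially the paper's: same $d$, $n'$, padding values $2n+1,\ldots,2n'-1$ and forward matching. The one difference is in the converse, where you get tightness for the whole instance from $\sum(F)=n'^2$ and then peel off the padding, whereas the paper first forces the padding from the inequalities alone and only then applies tightness to the remaining sets. (Your aside that elements of $A-d$ are at most $2n-1$ needs $\max(A)\le\sigma-2$, which is a separate trivial-instance check not implied by $\min(A)\ge\sigma-2n$, but your peeling argument never uses it.)
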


\begin{proof}
    Since membership in NP is trivial for \twoVCov, we focus on proving strong NP-hardness by reducing \RNTDMshort\ to \twoVCov. Due to Lemma~\ref{lem:RN3DM_sigma}, we can assume an \RNTDMshort\ instance $A=\{a_1,\ldots,a_n\}$, $\sum(A) + n(n+1) =n\sigma$, satisfying
    \begin{equation}\label{eq:sigma}
        \sigma \geq 2n +1.
    \end{equation}
    Additionally, by Lemma~\ref{lem:RN3DM_minimum_element} we have
    \begin{equation}\label{eq:min_element}
        \min(A) \geq \sigma -2n.
    \end{equation}
    
    Based on $A$, we construct an instance $F=\{f_1,\ldots,f_{n'}\}$ of \twoVCov. Define $d=\sigma - 2n -1$ and observe that $d\geq 0$ by~\eqref{eq:sigma}. We set $n'=n+d$ and define $F$ as follows:

    \[f_i = \begin{cases}
        f_i=a_i - d, & i\in [n] \\
        f_i=2i-1, & i\in [n+1,\ n']
    \end{cases}.\]
    Note that $f_i>0$, $\forall i\in [n']$, since by~\eqref{eq:min_element} we have
    \begin{equation*}
        \min(A) \geq \sigma - 2n \iff \min(A) - d >0.
    \end{equation*}
    Thus, $F$ is a well-defined \twoVCov\ instance. This concludes the construction.

    We now prove the correctness of the reduction, i.e., that $A$ has a solution if and only if $F$ admits a \twoVCov\ schedule.

    ($\Rightarrow$) Suppose $A$ has a solution, i.e., there exists a (perfect) matching $M\subseteq A\times [n] \times [n]$ with $a+b+c = \sigma$, $\forall (a,b,c)\in M$. We can equivalently restate this as follows:\footnote{In all subsequent relations, whenever we write $\exists M\subseteq \ldots$, we imply that $M$ is a perfect $3$-dimensional matching between the three respective sets, as per Def.~\ref{def:twoVCov} and~\ref{def:RN3DM}. For the rest of this proof, we omit this condition for the sake of readability.}
    \begin{align}
        & \exists M\subseteq A \times [n] \times [n]: a+b+c = \sigma,\ \forall (a,b,c)\in M. \notag \\
        \iff & \exists M\subseteq A \times [n] \times (\sigma - [n]): a+b=c,\ \forall (a,b,c)\in M. \notag \\
        \iff & \exists M\subseteq (A-d) \times ([n]+d) \times (\sigma - [n]): a+b=c,\ \forall (a,b,c)\in M. \notag \\
        \iff & \exists M\subseteq (A-d) \times [d+1,\ n'] \times [\sigma - n,\  \sigma - 1]: a+b=c,\ \forall (a,b,c)\in M. \label{eq:matching_original}
    \end{align}
    By Lemma~\ref{lem:disconnection_property}, we can use Definition~\ref{def:twoVCov} for \twoVCov. Hence, it suffices to prove that there exists a (perfect) matching $M^*\subseteq F \times [n'] \times [n'+1,\ 2n']$ such that $a+b\leq c,\ \forall (a,b,c)\in M^*$. Observe that:
    \begin{itemize}
        \item $A-d \subseteq F$ (by construction).
        \item $[d+1,\ n'] \subseteq [n']$.
        \item $[\sigma - n,\  \sigma - 1] \subseteq [n'+1,\ 2n']$, since $2n' = (\sigma-1) + (\sigma-2n-1)\geq \sigma - 1$ by~\eqref{eq:sigma}.
    \end{itemize}
    From this, it follows that the matching of~\eqref{eq:matching_original} can be used as part of $M^*$.\footnote{Notice that the $a+b=c$ condition of~\eqref{eq:matching_original} is stricter than the desired $a+b \leq c$ condition of $M^*$.} It now suffices to prove that the remaining elements can also be matched between them, i.e., that
    \begin{align}
        & \exists M\subseteq (F\setminus (A-d)) \times ([n'] \setminus [d+1,\ n'])  \times ([n'+1,\ 2n']\setminus [\sigma - n,\  \sigma - 1]): a+b\leq c,\ \forall (a,b,c)\in M. \notag \\
        & \iff \exists M\subseteq \bigcup_{i\in[n+1,\ n']}\{2i-1\} \times [d] \times [\sigma,\ 2n']: a+b\leq c,\ \forall (a,b,c)\in M. \label{eq:matching_remaining}
    \end{align}
    To prove~\eqref{eq:matching_remaining}, it suffices to match the elements of the sets $S_1=\bigcup_{i\in[n+1,\ n']}\{2i-1\}$, $S_2=[d]$ and $S_3=[\sigma,\ 2n']$ as follows.
    \begin{itemize}
        \item Match $\max(S_1)$ with $\min(S_2)$ and $\max(S_3)$ (i.e., $(2n'-1) + 1 = 2n'$).
        \item Remove the matched elements and repeat the above. In each step, $\max(S_1)$ decreases by $2$, $\min(S_2)$ increases by $1$ and $\max(S_3)$ decreases by $1$, thus preserving the $\max(S_1)+\min(S_2)=\max(S_3)$ condition.
    \end{itemize}
    Thus, we have proven that $M^*$ exists, concluding the forward direction of the correctness proof.

    ($\Leftarrow$) Suppose that $F$ admits a \twoVCov\ schedule; by Lemma~\ref{lem:disconnection_property}, we have:
    \begin{equation}\label{eq:matching_schedule}
        \exists M^*\subseteq F \times [n'] \times [n'+1,\ 2n']: a+b\leq c,\ \forall (a,b,c)\in M^*.
    \end{equation}

    First, we prove that, in $M^*$, the elements in $S=F\setminus (A-d)=\bigcup_{i\in[n+1,\ n']}\{2i-1\}$ must be matched with the $d$ smallest elements of $B=[n']$ and the $d$ largest elements of $C=[n'+1,\ 2n']$. Intuitively, this means that the vertical gold lines in Figure~\ref{fig:2Vcov} are compulsory.
    \begin{itemize}
        \item When $\max(S)=2n'-1$ is matched with any element in $B$, it creates a sum at least as large as $\max(C)=2n'$. Hence, $\max(S)$ and $\max(C)$ must be included in the same triplet. Since $\max(C)-\max(S)=\min(B)=1$, we obtain that it must be $(\max(S),\min(B),\max(C))\in M^*$.
        \item If we remove the above matched elements from $S$, $B$ and $C$, then $\max(S)$ decreases by~$2$, $\min(B)$ increases by $1$ and $\max(C)$ decreases by $1$. Then, the same holds for the new elements: $\max(S)$ matched with any available element in $B$ creates a sum at least as large as $\max(C)$. Thus, $\max(S)$ must be matched with $\min(B)$ and $\max(C)$ in $M^*$. We can repeat this procedure until all elements in $S$ are matched with the~$d$ smallest elements of $B$ and the~$d$ largest elements of $C$.
    \end{itemize}

    From the above we infer that~\eqref{eq:matching_schedule} is equivalent to:
    \begin{equation}\label{eq:matching_schedule_reduced}
        \exists M\subseteq (A-d) \times [d+1,\ n'] \times [\sigma-n,\ \sigma-1]: a+b\leq c,\ \forall (a,b,c)\in M.
    \end{equation}

    However, by definition of the \RNTDMshort\ problem it holds that
    \begin{align}\label{eq:tight_sums}
         \sum(A)+n(n+1) & =n\sigma. \notag \\
        \iff \sum(A-d)+\sum([d+1,\ n']) & =\sum([\sigma-n,\ \sigma-1]).
    \end{align}

    Thus, for all $n$ triplets $(a,b,c)\in M$ in~\eqref{eq:matching_schedule_reduced} to satisfy $a+b \leq c$, it is mandatory for all of them to satisfy $a+b = c$, since the sums of the respective sets are, in a sense, ``tight'' by~\eqref{eq:tight_sums}. We obtain that
    \begin{equation*}
        \exists M\subseteq (A-d) \times [d+1,\ n'] \times [\sigma-n,\ \sigma-1]: a+b= c,\ \forall (a,b,c)\in M.
    \end{equation*}
    By increasing all elements in the first set by $d$ and decreasing all elements in the second set by $d$, we can equivalently restate this as
    \begin{equation*}
        \exists M\subseteq A \times [n] \times [\sigma-n,\ \sigma-1]: a+b= c,\ \forall (a,b,c)\in M,
    \end{equation*}
    which is trivially equivalent to
    \begin{equation*}
        \exists M\subseteq A \times [n] \times [n]: a+b+c=\sigma,\ \forall (a,b,c)\in M.
    \end{equation*}
    Thus, we have proven that the \RNTDMshort\ instance $A$ admits a solution.

    This concludes the correctness of the reduction. Since \RNTDMshort\ is strongly NP-complete (Theorem~\ref{thrm:RN3DM_hardness}, Lemma~\ref{lem:RN3DM_sigma}), we can assume that $a_1,\ldots,a_n$ are polynomial in $n$. This implies that $\sigma$ is also polynomial in $n$ and, thus, the reduction runs in polynomial time and $F$ consists of values polynomial in $n$.
    We conclude that the described reduction preserves strong NP-hardness.
\end{proof}

\subsection{A corollary for the infinite version}

We will now use Theorem~\ref{thrm:2Vcov_hardness} to prove that \PWCov\ becomes strongly NP-hard if we allow frequencies~$f_i$ to change after a given amount of visits to $i$. First, we present a formal definition for this generalization of \PWCov.

\begin{definition}[\VarPWCov]\label{def:VarPWCov}
    Given an integer $t>0$ and two (multi)sets of positive integers (frequencies) $F=\{f_1,\ldots,f_n\}$, $F'=\{f_1',\ldots,f_n'\}$, the \VarPWCov\ problem asks whether there exists an infinite schedule visiting one $i\in [n]$ per time unit, such that for all $i\in [n]$, if $p_i$ is the position containing the $t$-th visit of $i$, it holds that:
    \begin{itemize}
        \item In positions $1,\ldots,p_i$, every $f_i$ consecutive entries contain at most one visit of~$i$.
        \item From position $p_i$ onward, every $f_i'$ consecutive entries contain at most one visit of~$i$.
    \end{itemize}
\end{definition}

\begin{theorem}\label{thrm:VAR_PWC}
    \VarPWCov\ is strongly NP-hard.
\end{theorem}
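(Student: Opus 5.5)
The plan is a reduction from \twoVCov, which is strongly NP-complete by Theorem~\ref{thrm:2Vcov_hardness}. Given a \twoVCov\ instance $F=\{f_1,\ldots,f_n\}$, build the \VarPWCov\ instance with $t=2$, the same first-phase frequencies $F$, and second-phase frequencies $f_i'=n$ for all $i\in[n]$. All values stay polynomial, so strong hardness transfers once the two instances are shown equivalent. The choice $f_i'=n$ is natural: after the second visits, each agent may appear at most once in any $n$ consecutive slots, and $n$ agents must cover those $n$ slots. So the tail is forced to be periodic with period $n$, each agent appearing exactly once per block.

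($\Rightarrow$) Take a \twoVCov\ schedule and normalize it with Lemma~\ref{lem:disconnection_property}. Then the first visits occupy positions $1,\ldots,n$ and the second visits occupy $n+1,\ldots,2n$ in some order $\pi$. Extend the schedule by repeating the block $\pi$ forever. Between an agent's first two visits the gap is at least $f_i$, as in the finite schedule. From its second visit onward, consecutive visits are exactly $n$ apart, so the $f_i'=n$ constraint holds and every slot is covered.

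($\Leftarrow$) This is the main obstacle: in an arbitrary infinite schedule, the first two visits of the agents need not fit inside the prefix of length $2n$. Let $u_i<v_i$ be the first two visit times of $i$, so $v_i-u_i\ge f_i$. I would first argue that after $T=\max_i v_i$ every window of $n$ slots contains each agent exactly once, by the covering and counting argument above. I would then try to show, by exchange arguments in the spirit of Lemma~\ref{lem:disconnection_property}, that the schedule can be modified so that every $v_i\le 2n$:
\begin{itemize}
\item if some agent's second visit occurs after slot $2n$, some agent must occupy a third (or later) visit inside the prefix;
\item swapping such a late visit earlier only increases the first-phase gaps;
\item the periodic tail can then be rebuilt with the order of the second visits.
\end{itemize}
Once every agent's first two visits lie in positions $1,\ldots,2n$, those $2n$ slots contain exactly two visits of each agent. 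They therefore form a valid \twoVCov\ schedule for $F$.

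The delicate point is that the swaps must not break the $f_i'=n$ constraint on the tail. If this exchange argument becomes messy, my fallback is to modify the construction rather than the proof. For instance, I would add padding agents, or scale the first-phase frequencies, so that an early third visit is impossible.
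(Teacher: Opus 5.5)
Your forward direction is fine, but the backward direction is not merely delicate. It is false for your construction, so no exchange argument can rescue it. Take $n=3$ and $F=\{6,1,1\}$. This is a no-instance of \twoVCov, because two visits of agent $1$ cannot be $6$ apart inside a schedule of length $6$. Yet with $t=2$ and $f_i'=3$ the infinite schedule
\[1,2,2,3,3,2,1,3,2,\;1,3,2,\;1,3,2,\ldots\]
is feasible. Agent $1$ visits at $1,7,10,13,\ldots$, with first gap $6$ and then gaps $3$. Agents $2$ and $3$ visit at $2,3,6,9,\ldots$ and $4,5,8,11,\ldots$, with first gap $1$ and then gaps $3$. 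This is exactly the phenomenon you flagged. An agent with a small first-phase frequency enters phase two early and makes a third visit inside the prefix (agent $2$ at slot $6$), which lets another agent's second visit move beyond slot $2n$.

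Your proposed swap also goes the wrong way. In Lemma~\ref{lem:disconnection_property} the swaps move first visits earlier and second visits later, so they only enlarge gaps. Pulling a late second visit into the prefix instead shrinks $v_i-u_i$ and can violate $f_i$; here it would give agent $1$ a gap of $5<6$.

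The repair needs two ingredients that your fallback only gestures at. First, make early third visits of the original agents impossible by taking $f_i'$ large; the paper uses $f_i'=3n$. The original agents then cannot cover the tail. So, second, add a filler agent $n+1$ with $f_{n+1}=2n+1$ and $f_{n+1}'=1$, which can cover every slot after its second visit.

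With $t=2$, each of the $n+1$ agents now covers at most two of the slots $1,\ldots,2n+2$. Agent $n+1$ covers two only if it sits at $1$ and $2n+2$. Covering these $2n+2$ slots therefore forces agent $n+1$ into positions $1$ and $2n+2$. It also forces each original agent to make exactly its two first-phase visits in slots $2,\ldots,2n+1$, which is a \twoVCov\ schedule for $F$. The forward direction is then simple: place the filler at slot $1$, copy the finite schedule into slots $2,\ldots,2n+1$, and place the filler in every slot from $2n+2$ on.
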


\begin{proof}
    We will reduce \twoVCov\ to \VarPWCov. Let $A=\{a_1,\ldots,a_n\}$ be a \twoVCov\ instance. We construct a \VarPWCov\ instance $V=\left(F=\{f_1,\ldots,f_n,f_{n+1}\},F'=\{f_1',\ldots,f_n',f_{n+1}'\},t\right)$ as follows.
    \begin{itemize}
        \item $f_i=a_i$ and $f_i'=3n$ for $i\in [n]$.
        \item $f_{n+1}=2n+1$ and $f_{n+1}'=1$.
        \item $t=2$.
    \end{itemize}
    We now prove that $A$ admits a schedule if and only if $V$ admits a schedule.

    ($\Rightarrow$) Suppose $A$ admits a schedule $S$ of length $2n$, visiting each $i\in [n]$ exactly twice. We construct a schedule for $V$ as follows, making use of the fact that $n+1$ can cover all positions after its second visit, since $t=2$ and $f_{n+1}'=1$.
    \begin{itemize}
        \item Place the first visit of $n+1$ in position $1$ and its second visit in position $2n+2$.
        \item For positions $2$ up to $2n+1$, copy the schedule $S$.
        \item Fill all positions after $2n+2$ with $n+1$.
    \end{itemize}
    It follows immediately that, if $S$ is feasible for $A$, then the proposed schedule is feasible for $V$.

    ($\Leftarrow$) Suppose $V$ admits an infinite schedule $S$. We will prove that $A$ admits a schedule $S'$ of length $2n$.
    First, observe that, since $t=2$ and $f_{n+1}=2n+1$, it follows that $n+1$ can cover at most two positions in $[2n+2]$ in $S$. Thus, the remaining $2n$ positions in $[2n+2]$ have to be covered by $1,\ldots,n$. 
    However, since $t=2$ and $f_i'=3n$ for $i\in [n]$, it follows that each $i\in [n]$ can cover at most two positions in $[2n+2]$ in $S$. Thus, we have proven that every $i \in [n+1]$ can cover at most two positions of the positions $[2n+2]$ in $S$. Since $f_{n+1}=2n+1$, this immediately implies that $n+1$ must be placed in positions $1$ and $2n+2$ in $S$, which in turn implies that positions $2,\ldots, 2n+1$ of~$S$ are covered by two visits of each $i\in [n]$. Thus, we can simply construct $S'$ by copying positions $2,\ldots, 2n+1$ of~$S$, with its feasibility following immediately from Def.~\ref{def:kVCov} and~\ref{def:VarPWCov}.
\end{proof}

\section{Positive results and tractable special cases for \twoVCov}

\subsection{A linear-time algorithm for two distinct frequencies}

In this subsection we limit \twoVCov\ to instances with up to two distinct frequencies. We will show a greedy linear-time algorithm for this case, loosely based on the observation that follows. Recall that we are allowed to use Def.~\ref{def:twoVCov} for \twoVCov, due to Lemma~\ref{lem:disconnection_property}.

\begin{definition}[Dominated triplet]\label{def:dominated_triplet}
    Let $F=\{f_1,\ldots,f_n\}$ be a \twoVCov\ instance and let $T_1=(f,b,c)$ and $T_2=(f',b',c')$ be two triplets in $F\times [n]\times [n+1,\ 2n]$ with $f+b \leq c$ and $f'+b'\leq c'$. Then, we say that $T_1$ \emph{dominates} $T_2$ if 
    $f\geq f'$, $b\geq b'$ and $c\leq c'$.\footnote{Note that we do not demand e.g. $f$ and $f'$ to be distinct. Trivially, any feasible triplet dominates itself.}
\end{definition}

Intuitively, in any feasible \twoVCov\ triplet $(f,b,c)$ we can swap $f$ or $b$ for a smaller number or $c$ for a larger number, while maintaining the triplet's feasibility. Hence, there is no reason to pick some triplet over one that dominates it (if both are available): the latter achieves feasibility while using elements that are, in a sense, harder to match than the elements of the former. We formalize this idea as follows.

\begin{observation}\label{obs:dominated_triplet}
    Suppose we have (irreversibly) matched some triplets in $F\times [n]\times [n+1,\ 2n]$ in an attempt to construct a \twoVCov\ solution for $F$, and let $T_1$ and $T_2$ be two feasible triplets consisting of unmatched elements in $F$, $[n]$ and $[n+1,\ 2n]$, with $T_1$ dominating $T_2$. Then, if there is a solution for $F$ using $T_2$ (along with all previously matched triplets), there is also a solution for $F$ using $T_1$ (along with all previously matched triplets). 
\end{observation}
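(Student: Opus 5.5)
We argue by an exchange. Fix a solution $M$ for $F$ that contains all previously matched triplets and also $T_2=(f',b',c')$; we turn it into a solution containing $T_1=(f,b,c)$ in place of $T_2$. The elements $f,b,c$ are unmatched at the moment of the choice, so each of them lies either in $T_2$ itself or in some triplet of $M$ that is not a previously matched one. If $T_1=T_2$ as triplets of elements there is nothing to do, so we assume they differ in at least one coordinate.

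The plan is to swap coordinates one at a time. First replace $f'$ by $f$: if $f\neq f'$, let $T_f=(f,b_1,c_1)\in M$ be the triplet containing $f$. Exchange the first coordinates of $T_2$ and $T_f$, giving $(f,b',c')$ and $(f',b_1,c_1)$. The latter is feasible because $f'\le f$ gives $f'+b_1\le f+b_1\le c_1$. The former need not be feasible yet, but it will be fixed by the next two swaps.

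Next replace $b'$ by $b$. The element $b$ lies in some triplet $(f_2,b,c_2)$, which is not the one just modified unless $b=b_1$; that case is handled in the same way. Exchange the second coordinates. The triplet receiving $b'$ stays feasible since $b'\le b$. Finally replace $c'$ by $c$: the triplet $(f_3,b_3,c)$ containing $c$ receives $c'\ge c$, so $f_3+b_3\le c\le c'$ and it stays feasible.

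After the three swaps the current triplet is exactly $T_1=(f,b,c)$, which is feasible by assumption. Every other triplet changed only by taking a smaller first coordinate, a smaller second coordinate, or a larger third coordinate, so each stays feasible by the monotonicity of the condition $x+y\le z$.

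The previously matched triplets are never touched, because $f,b,c$ and $f',b',c'$ are all unmatched. Each swap preserves the perfect-matching structure: every element still appears exactly once, since we only exchange two elements within one coordinate.

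The main obstacle is the bookkeeping when several of $f,b,c$ lie in the same triplet of $M$. The order-of-swaps argument handles this, but it must be checked that each intermediate triplet other than the current one only ever receives an element that is no larger (first and second coordinates) or no smaller (third coordinate). If a triplet receives several such replacements, the same monotonicity gives its feasibility. The possibly infeasible intermediate version of $T_2$ is harmless, since feasibility is only required at the end.
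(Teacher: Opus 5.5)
Your proof is correct and follows essentially the same route as the paper. The paper justifies the observation only by ``simple substitution'': in the triplets that remain besides $T_2$, the elements $f,b,c$ are replaced by $f',b',c'$. Your three coordinate-wise swaps carry out exactly that substitution, with feasibility preserved by the monotonicity of $x+y\le z$, and you additionally spell out the bookkeeping the paper leaves implicit (previously matched triplets untouched, shared elements, intermediate infeasibility).
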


The correctness of Observation~\ref{obs:dominated_triplet} formally stems from the fact that a solution using the elements remaining unmatched by picking $T_2$ can be transformed into a solution using the elements remaining unmatched by picking $T_1$, with simple substitution.

\begin{theorem}\label{thrm:two_distinct}
    There is an algorithm running in $\bigO(n)$ time that solves \twoVCov\ instances with up to two distinct frequencies.
\end{theorem}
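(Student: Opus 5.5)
The plan is to work throughout with the matching formulation of Definition~\ref{def:twoVCov}, which Lemma~\ref{lem:disconnection_property} justifies. Let the two distinct frequencies be $x<y$, with multiplicities $p$ and $q$ respectively, where $p+q=n$. Every feasible triplet is either $(x,b,c)$ with $c-b\ge x$ or $(y,b,c)$ with $c-b\ge y$. So the task is to choose a perfect matching $\pi$ between $B=[n]$ and $C=[n+1,2n]$ in which at least $q$ pairs have difference $\ge y$ and all $n$ pairs have difference $\ge x$. I would first show that a solution, if one exists, can be brought to a \emph{canonical form}, and then check the canonical form in linear time.

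\textbf{Step 1 (sorted pairing inside each class).} Within the $x$-class, and separately within the $y$-class, I will pair the $b$'s and $c$'s in sorted order: smallest $b$ with smallest $c$, and so on. The justification is a standard uncrossing argument. Take two triplets with the same frequency $f$, say $(f,b_1,c_1)$ and $(f,b_2,c_2)$ with $b_1<b_2$ and $c_1>c_2$. Replace them by $(f,b_1,c_2)$ and $(f,b_2,c_1)$. Then $c_1-b_2 > c_2-b_2\ge f$, and $c_2-b_1 > c_2-b_2\ge f$, so both new triplets are feasible.

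\textbf{Step 2 (choosing the index sets of the two classes).} This is the main obstacle. I conjecture the following canonical structure, derived from Observation~\ref{obs:dominated_triplet}. The $y$-triplets should take the $q$ smallest first-visit positions, $B_y=[q]$. For their second-visit positions $C_y$, the greedy rule goes in the order $b=1,\dots,q$: each $b$ is matched to the \emph{smallest} still-unused $c\ge b+y$. This keeps small $c$'s reserved only where needed and leaves large $c$'s to the $x$-triplets, which hold the larger $b$'s. The remaining elements are then paired in sorted order as in Step 1 and tested against $x$.

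To justify the rule I would use pairwise exchanges between a $y$-triplet and an $x$-triplet. Suppose a $y$-triplet $(y,b',c)$ and an $x$-triplet $(x,b,c')$ satisfy $b<b'$. Swapping the full pairs is not always feasible, so the case analysis goes by comparing $c$ with $c'$.
\begin{itemize}
\item If $c\le c'$, move to $(y,b,c)$ and $(x,b',c')$. The new $x$-triplet needs $c'-b'\ge x$, and this follows from $c'\ge c\ge y+b'>x+b'$.
\item If $c>c'$, move instead to $(y,b,c)$ and $(x,b',c')$ only when $c'-b'\ge x$. Otherwise use the domination relation, arguing that the $y$-triplet's pair can be rerouted without loss.
\end{itemize}
I expect exactly this case distinction to need the most care. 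If the clean canonical form fails, the fallback is a two-pointer greedy. It scans $b$ from $n$ down to $1$; the large $b$'s are the hardest elements to serve. At each step it picks, among the currently feasible triplets, the one that dominates all the others in the sense of Definition~\ref{def:dominated_triplet}. With only two frequency values such a maximal feasible triplet can always be identified, and Observation~\ref{obs:dominated_triplet} then guarantees that the greedy choice never destroys solvability.

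\textbf{Step 3 (linear time).} Computing the canonical assignment takes $\bigO(n)$ time. The greedy choice of the smallest feasible $c$ for consecutive $b$'s is monotone, so a single pointer into $C$ suffices. Checking the sorted pairings of the two classes is one further linear pass. Correctness then follows from Steps~1 and~2 together with Lemma~\ref{lem:disconnection_property}.
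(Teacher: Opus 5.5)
There is a genuine gap in Step~2: the canonical form $B_y=[q]$ is false, so the algorithm built on it rejects yes-instances.

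Take $F=\{3,3,6,6,6\}$, i.e.\ $n=5$, $x=3$, $y=6$, $q=3$, $B=[5]$, $C=[6,10]$. This instance is feasible, for example with the triplets $(6,1,7)$, $(6,2,8)$, $(3,3,6)$, $(6,4,10)$, $(3,5,9)$. Now suppose the $y$-triplets use $B_y=\{1,2,3\}$. Then every $b\in B$ gets a threshold $b+f_b\geq 7$: namely $7,8,9$ for the $y$'s and $7,8$ for the $x$'s. So position $6\in C$ can never be covered, and no choice of $C_y$ helps. Your greedy takes $C_y=\{7,8,9\}$, is left with the pair $(4,6)$, and answers no.

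The failure sits exactly in your unresolved case $c>c'$. For the $y$-triplet $(6,4,10)$ and the $x$-triplet $(3,3,6)$, neither exchange works, since $6-4<3$ and $6-3<6$. No appeal to domination can reroute this, because the claimed structure does not hold. When $y$ is large, some $y$-triplets must take large $b$'s so that a small $b$ is free to absorb a small $c$ through $x$.

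Your fallback does not close the gap either. It presupposes that among the feasible triplets through the current largest $b$ there is one dominating all the others, and that is false. In the same instance, first take $(3,5,8)$; this does dominate everything through $b=5$, since $6+5>10$. The candidates through $b=4$ then include $(6,4,10)$ and $(3,4,7)$, which are incomparable under Definition~\ref{def:dominated_triplet}. Only the first extends to a solution: after $(3,4,7)$, the three remaining $6$'s cannot use $c=6$.

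The missing idea is to anchor the domination argument at $\max(C)$ rather than at $\max(B)$.
\begin{itemize}
\item Every solution has some triplet through $\max(C)$. If $(y,\max(B),\max(C))$ is feasible, it dominates all of them, so Observation~\ref{obs:dominated_triplet} lets you take it.
\item Otherwise $y+\max(B)>\max(C)$, so $\max(B)$ can only be served by $x$. Then $(x,\max(B),c)$ with the smallest feasible $c$ dominates every triplet through $\max(B)$.
\item Once one frequency class is exhausted, your sorted pairing from Step~1 finishes the job.
\end{itemize}
A single monotone pointer into $C$, as in your Step~3, gives $\bigO(n)$ total time. This is the paper's algorithm.
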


\begin{proof}
    Let $F$ be a multiset consisting of $n_1\geq 0$ copies of frequency $x\geq 1$ and $n_2\geq 0$ copies of frequency~$y>x$, with $n_1+n_2=n$. We define $B=[n]$ and $C=[n+1,\ 2n]$.
    We can construct a solution for $F$ as follows.
    \begin{enumerate}
        \item If there are no unmatched copies of~$x$ left in~$F$, then it clearly suffices to match all copies of~$y$ with the elements in $B$ and $C$ in sorted order, and check whether this creates a feasible solution. Symmetrically, the same holds if no unmatched copies of $y$ are left in~$F$. For all subsequent cases, we assume that there is at least one unmatched copy of both $x$ and $y$ in $F$.
        \item If $y+\max(B) \leq \max(C)$, then $(y,\max(B),\max(C))$ is a feasible triplet and thus dominates all triplets that contain $\max(C)$. Since any solution must have a triplet that contains $\max(C)$, we can greedily pick the triplet $(y,\max(B),\max(C))$ by Observation~\ref{obs:dominated_triplet}.
        \item Otherwise, we have $y+\max(B) > \max(C)$. It follows that no triplet containing $y$ and $\max(B)$ is feasible. Thus, $\max(B)$ must be matched with $x$. Hence:
        \begin{itemize}
            \item If there is no feasible triplet containing $x$ and $\max(B)$, then no schedule exists.
            \item Otherwise, the feasible triplet $(x,\max(B),c)$ with minimum possible $c$ dominates all other triplets of this form. By Observation~\ref{obs:dominated_triplet}, we can greedily pick this triplet.
        \end{itemize}
        \item We remove the elements of the newly picked triplet from $F,B,C$ and repeat the process.
    \end{enumerate}

    The correctness of the described algorithm follows from the arguments above. The algorithm runs in time $\bigO(n)$, since in each of the at most $n$ repetitions we either pick a triplet and remove its elements from the input, or we conclude that no schedule exists. We remark that finding the minimum possible $c$ in step (3) requires $\bigO(n)$ time for all repetitions combined: in each repetition the sum $x+\max(B)$ can only decrease, therefore it suffices to store a pointer in the set $C$ showing the last-picked $c\in C$.
\end{proof}

\subsection{Parameterizing by the number of distinct frequencies}

In this subsection we parameterize \twoVCov\ by the number of distinct frequencies, also known as \emph{number of numbers} for numerical problems (cf.~\cite{Number_of_numbers,ICALP_kVisits}). We will prove that \twoVCov\ admits a randomized polynomial-time ($\mathrm{RP}$) algorithm when the number of numbers is constant, serving as a generalization of our algorithm in the previous subsection.

Our main tool here is the \EWPM\ (\EWPMshort) problem, which is known to admit a randomized polynomial-time algorithm for polynomially bounded weights through a reduction~\cite{Maalouly_stacs_2022,EWPM_to_EM_reduction} to \EM~\cite{Vazirani_EM,Papadimitriou_EM}. The derandomization of this algorithm is a long-standing open question in theoretical computer science~\cite{Maalouly_stacs_2022,Maalouly_esa_2025,Gurjar_EM_Complete}. Note that \EWPMshort\ is NP-complete when exponential weights are allowed~\cite{EWPM_to_EM_reduction}.

\begin{definition}[\EWPMshort]\label{def:EWPM}
    Given an edge-weighted (multi)graph $G=(V,E,w)$ and an integer $W$, the \EWPM\ $(\EWPMshort)$ problem asks whether there exists a perfect matching~$M$ in~$G$ with $\sum_{e \in M} w(e) =W$.
\end{definition}

\begin{remark}
    \EWPMshort\ is usually defined in \emph{simple} graphs in the literature. Regardless, there exists a polynomial-time reduction from \EWPMshort\ in multigraphs to \EWPMshort\ in simple graphs by Kanellopoulos et al.~\cite{ICALP_kVisits}. Thus, the aforementioned results for \EWPMshort\ generalize to multigraphs. For this reason, we state Def.~\ref{def:EWPM} directly for multigraphs.
\end{remark}

We will now reduce \twoVCov\ to \EWPMshort. We denote the number of numbers of a \twoVCov\ instance by $p$. This reduction uses the same core ideas as the one used in~\cite{ICALP_kVisits} for \twoVPack.

\begin{theorem}\label{thrm:2VCov_to_EWPM}
    \twoVCov\ reduces in time $\bigO(n^2p)$ to \EWPMshort\ with weights bounded by $\bigO(n^{p-1})$, where $p$ is the number of distinct frequencies in the input.
\end{theorem}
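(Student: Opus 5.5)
By Lemma~\ref{lem:disconnection_property} and Definition~\ref{def:twoVCov}, we need to decide whether there is a perfect 3-dimensional matching of $F\times B\times C$, with $B=[n]$ and $C=[n+1,\ 2n]$, such that every triplet $(f,b,c)$ satisfies $f+b\leq c$. The plan is to replace the frequencies by their \emph{counts} and encode those counts in the weight of a bipartite perfect matching. Let the distinct frequencies be $x_1<\dots<x_p$, occurring with multiplicities $n_1,\ldots,n_p$ where $\sum_j n_j=n$. Since triplets are interchangeable up to the value of their frequency, a solution is exactly a perfect matching $\mu$ between $B$ and $C$ together with a labelling of each edge $(b,c)\in\mu$ by some frequency value $x_j$. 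The labelling must satisfy $x_j+b\leq c$, and value $x_j$ must be used exactly $n_j$ times.

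We build a bipartite multigraph $G$ with sides $B$ and $C$. For every pair $(b,c)\in B\times C$ and every $j\in[p]$ with $x_j+b\leq c$, add a parallel edge $e_{b,c,j}$. This gives at most $n^2p$ edges, built in $\bigO(n^2p)$ time. A perfect matching in $G$ corresponds to a matching $\mu$ with a feasible labelling. It remains to enforce the counts $n_j$ through the weights.

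Set the base $N=n+1$ and give edge $e_{b,c,j}$ weight $w=N^{j-1}$ for $j\in[p-1]$, and weight $0$ for $j=p$. Set the target $W=\sum_{j=1}^{p-1}n_jN^{j-1}$. For a perfect matching $M$, let $m_j$ be the number of edges labelled $j$; its weight is $\sum_{j<p} m_jN^{j-1}$. Each $m_j\leq n<N$, so this is a base-$N$ representation with valid digits, and it equals $W$ exactly when $m_j=n_j$ for all $j<p$. Because $M$ has exactly $n$ edges, this also forces $m_p=n-\sum_{j<p}n_j=n_p$. Dropping the top digit in this way is what yields weights bounded by $N^{p-2}\leq\bigO(n^{p-1})$, as the statement requires.

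Correctness then has two directions. Given a solution of Definition~\ref{def:twoVCov}, take the edges $e_{b,c,j}$ where $f=x_j$; the counts match, so the weight is $W$. Conversely, an exact-weight perfect matching gives $m_j=n_j$, so we can assign the actual copies of the frequencies to the edges by label, and each resulting triplet is feasible by construction.

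The main subtlety is the uniqueness of the base-$N$ decomposition: we must make sure no carries can occur. This holds because every digit $m_j$ is at most $n<N$. Everything else is bookkeeping.
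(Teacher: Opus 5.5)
Your proof is correct and is essentially the paper's construction. Both use the bipartite multigraph on $B=[n]$ and $C=[n+1,\ 2n]$ with one parallel edge per distinct frequency $x_j$ satisfying $x_j+b\le c$, base-$(n+1)$ weights, and uniqueness of the base-$(n+1)$ representation, with no carries because every count is at most $n$. The only difference is that you give the $p$-th frequency class weight $0$ and recover its count from $|M|=n$, which sharpens the weight bound to $(n+1)^{p-2}$. This refinement is harmless but not needed: the paper's unmodified weights satisfy $(n+1)^{j-1}\le (n+1)^{p-1}\le e\, n^{p-1}$ (using $p\le n$), so they already meet the stated $\mathcal{O}(n^{p-1})$ bound.
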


\begin{proof}
    Let $F$ be a \twoVCov\ instance consisting of $p$ distinct numbers $f_1,\ldots,f_p$ with multiplicities $n_1,\ldots,n_p$ respectively, where $n_1+\ldots+n_p=n$. By Lemma~\ref{lem:disconnection_property}, we can use Def.~\ref{def:twoVCov} for \twoVCov, hence the problem is reduced to finding a perfect matching between the sets~$F$, $B=[n]$ and $C=[n+1,\ 2n]$ such that for every triplet $(f,b,c)\in F\times B\times C$ in the matching it holds that $f+b \leq c$.

    We construct a bipartite multigraph $G$ as follows. We use $B=[n]$ and $C=[n+1,\ 2n]$ as the two vertex sets of $G$ (for simplicity, we denote vertices with numbers). For each pair $(b,c)\in B\times C$, we connect $b$ and $c$ with one edge for every distinct $f_i\in F$ ($i\in [p]$) that satisfies $f_i +b \leq c$; the weight of the respective edge is set to $w_i=(n+1)^{i-1}$ ($i\in [p]$). Note that parallel edges may exist, if more than one (distinct) $f_i$ satisfies $f_i +b \leq c$.

    We prove that $F$ admits a \twoVCov\ schedule if and only if $G$ with target weight $W=\sum_{i\in [p]} n_i w_i$ is a yes-instance of \EWPMshort.

    ($\Rightarrow$) Suppose $F$ admits a \twoVCov\ schedule, i.e., there exists a (perfect) matching $M\subseteq F\times B\times C$ such that $\forall (f,b,c)\in M$ it holds that $f+b\leq c$. By construction, for each triplet $(f_i,b,c)\in M$ there exists an edge between $b$ and $c$ with weight $w_i=(n+1)^{i-1}$ in $G$. Since $M$ is a (perfect) matching, each $b\in B$ and $c\in C$ appears in exactly one triplet in $M$. Thus, the set of the edges in $G$ that correspond to the $n$ triplets of $M$ is a perfect matching in $G$. The weight of this matching is $W=\sum_{i\in [p]} n_i w_i$, since each $f_i$ ($i\in [p]$) appears in exactly $n_i$ triplets of $M$.

    ($\Leftarrow$) Suppose $G$ has a perfect matching $M$ of total weight $W=\sum_{i\in [p]} n_i w_i$. Observe that there is a unique combination of $n$ weights in $G$ (of the form $w_i=(n+1)^{i-1}$, $i\in [p]$) with sum equal to~$W$; this is due to the uniqueness of base $(n+1)$ representation. Thus, the only way for $M$ to have total weight $W$ is to use $n_i$ edges of weight $w_i=(n+1)^{i-1}$, for all $i\in [p]$. It now follows that picking the triplets $(f,b,c)$ corresponding to the edges in $M$ induces a solution for $F$, by reversing the arguments presented in the forward direction of this proof.

    This concludes the correctness of the reduction. The time required to construct $G$ is $\bigO(n^2p)$.
\end{proof}

We obtain the following corollary from Theorem~\ref{thrm:2VCov_to_EWPM}, combined with a reduction from \EWPMshort\ to \EM~\cite{Maalouly_stacs_2022,EWPM_to_EM_reduction} and the randomized polynomial-time algorithm of Mulmuley, Vazirani and Vazirani~\cite{Vazirani_EM} for \EM. The corollary only holds for constant number of numbers, since the weights of the resulting \EWPMshort\ instance (which are $\bigO(n^{p-1})$ by Theorem~\ref{thrm:2VCov_to_EWPM}) must be polynomially bounded for the reduction to \EM\ to run in polynomial time~\cite{Maalouly_stacs_2022,EWPM_to_EM_reduction}.

\begin{corollary}\label{cor:RP}
    \twoVCov\ is in $\mathrm{RP}$ when the number of numbers is constant.
\end{corollary}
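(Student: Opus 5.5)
The corollary follows by composing Theorem~\ref{thrm:2VCov_to_EWPM} with the known randomized algorithm for \EWPMshort\ under polynomially bounded weights. Given a \twoVCov\ instance $F$ with a constant number $p$ of distinct frequencies, I first apply Theorem~\ref{thrm:2VCov_to_EWPM}. This yields, in time $\bigO(n^2p)=\bigO(n^2)$, a bipartite multigraph $G$ on $2n$ vertices with edge weights $(n+1)^{i-1}\le (n+1)^{p-1}$ and target $W=\sum_i n_i(n+1)^{i-1}\le n(n+1)^{p-1}$. The number of edges is at most $pn^2$. Since $p$ is a constant, every weight and $W$ are polynomial in $n$, and the size of $G$ is polynomial as well.

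Next, I use the remark on multigraphs to convert $G$ into an equivalent \EWPMshort\ instance on a simple graph; the reduction of~\cite{ICALP_kVisits} runs in polynomial time. The weights must remain polynomially bounded, so I need to check that this reduction does not blow them up beyond a polynomial. I expect it only subdivides parallel edges and places the weight on one of the new edges, possibly adding a polynomial offset.

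I then invoke the reduction from \EWPMshort\ with polynomially bounded weights to \EM~\cite{Maalouly_stacs_2022,EWPM_to_EM_reduction}, which is polynomial when the weights are polynomial. I finish with the Mulmuley--Vazirani--Vazirani algorithm~\cite{Vazirani_EM}. That algorithm runs in randomized polynomial time with one-sided error: it never accepts a no-instance, and it accepts a yes-instance with probability at least $1/2$ (amplifiable). All reductions in the chain are deterministic, many-one, and answer-preserving, so the one-sided error transfers unchanged to \twoVCov. This places the problem in $\mathrm{RP}$.

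The main obstacle is the error-direction bookkeeping rather than any calculation. I must confirm that the chosen \EM\ algorithm has the right one-sided error for membership in $\mathrm{RP}$: a false negative is acceptable, but a false positive is not. In the isolation-lemma/Pfaffian approach, a nonzero determinant coefficient certifies the existence of a matching, so errors only occur on yes-instances, which is the correct direction. I must also confirm that the polynomial bound on the weights, $n^{\bigO(p)}$ with $p=\bigO(1)$, survives every step of the chain.
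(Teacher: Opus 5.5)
Your proposal is correct and follows the paper's argument. You apply Theorem~\ref{thrm:2VCov_to_EWPM}, observe that constant $p$ keeps the weights $\bigO(n^{p-1})$ polynomial, pass through the multigraph-to-simple-graph reduction of~\cite{ICALP_kVisits} and the \EWPMshort-to-\EM\ reduction, and finish with the one-sided-error Mulmuley--Vazirani--Vazirani algorithm. Your extra checks on the error direction and on weight preservation are sound and make explicit what the paper leaves implicit; for instance, replacing a parallel edge by a path of length three with the weight on one end edge keeps the weights unchanged and the graph bipartite.
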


Since it is rare for a problem to be in $\mathrm{RP}$ and not in $\mathrm{P}$, we state the following conjecture.

\begin{conjecture}\label{conj:FPT_numbers}
    \twoVCov\ admits a (deterministic) XP algorithm parameterized by the number of numbers.
\end{conjecture}

Note that such an algorithm might also be FPT, since various numerical matching variants are known to be FPT by the number of numbers~\cite{Number_of_numbers}. The standard method of showing this through a reduction to \ILP~\cite{Number_of_numbers}, however, does not seem applicable to \twoVCov\ because two of three sets in Def.~\ref{def:twoVCov} consist of distinct numbers. Hence, we only conjecture an XP algorithm, due to the $\bigO(n^{p-1})$ weights introduced by Theorem~\ref{thrm:2VCov_to_EWPM}.

\subsection{An improved brute-force algorithm}

A naive brute-force algorithm for \twoVCov\ would run in time $\bigO((2n)!)$. However, from Lemma~\ref{lem:disconnection_property} we can immediately obtain the following improved running time. 

\begin{corollary}
    There is a brute-force algorithm running in time $\bigO(n!)$ for \twoVCov.
\end{corollary}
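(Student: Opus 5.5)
By Lemma~\ref{lem:disconnection_property}, it suffices to search over schedules whose first visits occupy positions $1,\ldots,n$ and whose second visits occupy positions $n+1,\ldots,2n$. Such a schedule is determined by two permutations: $\pi$, assigning first-visit positions, and $\tau$, assigning second-visit positions. The plan is to enumerate only one of these and fix the other canonically, so the search costs $n!$ rather than $(2n)!$ or $(n!)^2$.

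Concretely, enumerate all $n!$ bijections $\pi\colon [n]\to[n]$ giving the first-visit positions $b_i=\pi(i)$. For a fixed $\pi$, the problem becomes a bipartite matching between the values $r_i=f_i+b_i$ and the positions $C=[n+1,\ 2n]$, where $r_i$ may be matched to $c$ iff $r_i\le c$. This is a threshold (nested) bipartite structure: the neighborhoods are upward-closed intervals in $C$. Hence a perfect matching exists iff the greedy assignment works, namely sorting the $r_i$ in non-decreasing order and matching them to $n+1,\ldots,2n$ in order. Equivalently, the $j$-th smallest $r_i$ must be at most $n+j$ for every $j$.

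The correctness of this greedy step is a standard exchange argument. If a solution matches $r_i\le r_j$ to $c>c'$ respectively, then swapping $c$ and $c'$ keeps both constraints satisfied, since $r_i\le r_j\le c'$ and $r_j\le c'<c$. Repeating such swaps turns any solution into the sorted one.

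For the running time, a naive implementation re-sorts the $r_i$ for each $\pi$, costing $\bigO(n\log n)$ and giving $\bigO(n!\cdot n\log n)$ overall. Since the paper assumes sorted input and frequencies are integers with $r_i\le 2n+\max F$, counting sort would still leave a factor of $n$. The main obstacle is therefore removing this polynomial factor to reach $\bigO(n!)$ exactly.

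Two approaches seem promising. The first is to enumerate permutations in an order where consecutive permutations differ by a transposition (Heap's algorithm or Steinhaus–Johnson–Trotter). Such a swap changes only two $r$-values, and feasibility can be maintained by checking the counts $|\{i: r_i\le c\}|\ge c-n$ for all $c$. However, a single swap may affect $\bigO(n)$ of these counters, so this alone is not obviously constant per step. The second approach, which I would try first, is to avoid full enumeration with a depth-first search that assigns first-visit positions one element at a time and prunes on partial infeasibility. Amortizing the $\bigO(n)$ work at depth $k$ over the $n!/(n-k)!$ nodes at that depth gives a total of $\sum_k n!/(n-k)!\cdot \bigO(1)=\bigO(n!)$ once the leaf check is made $\bigO(1)$ via incrementally maintained counts.

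If neither approach achieves a clean $\bigO(n!)$, I would state the bound as $n!\cdot\mathrm{poly}(n)$. This matches the evident intent of the corollary, which is to contrast with the $(2n)!$ of the naive search.
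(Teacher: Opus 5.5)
Your core argument is correct, and it follows the route the paper's one-line proof must mean. The paper only says the bound follows "immediately" from Lemma~\ref{lem:disconnection_property}. Lemma~\ref{lem:disconnection_property} by itself only reduces the search to $(n!)^2$ pairs of half-permutations. Your observation supplies the step the paper leaves implicit: once the first-visit permutation is fixed, the second visits form a threshold bipartite matching that the sorted greedy assignment decides. That brings the count down to $n!$ candidates with polynomial work each, and your exchange argument for the greedy step is fine.

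What you actually prove is $\mathcal{O}(n\cdot n!)$. In context this already matches the intended claim. The paper's $\mathcal{O}$ here evidently suppresses polynomial factors, since its "naive $\mathcal{O}((2n)!)$" also ignores the $\Theta(n)$ cost of checking each candidate.

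Your DFS paragraph, however, does not establish a literal $\mathcal{O}(n!)$. The "$\mathcal{O}(1)$ leaf check via incrementally maintained counts" is asserted without any mechanism. It also contradicts your own remark that a single reassignment can shift $\Theta(n)$ Hall counters. In addition, "$\mathcal{O}(n)$ work at depth $k$" does not amortize to $\mathcal{O}(1)$ per node. A per-node cost depending on $n$ gives $\mathcal{O}(n\cdot n!)$ again.

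What does make a DFS run in $\mathcal{O}(n!)$ is per-node work that depends only on the number $m$ of remaining elements. There are at most $n!/m!$ nodes with $m$ elements left, so any $\mathcal{O}(m^2+1)$ per-node cost totals $n!\sum_{m\ge 0}\mathcal{O}(m^2+1)/m!=\mathcal{O}(n!)$.

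Here is one concrete way to achieve this. Process the second-visit positions $c=2n,2n-1,\ldots,n+1$ in order. At a node with $m$ elements left, branch on which remaining first-visit position $b$ is paired with $c$. Pair them with the largest remaining frequency $f^*\le c-b$, and prune if none exists. This choice is without loss of generality by Observation~\ref{obs:dominated_triplet}, since $(f^*,b,c)$ dominates $(f,b,c)$ for every $f\le f^*$. Keep the remaining $b$'s and $f$'s in sorted doubly linked lists, so deletion and restoration take constant time. Then each node costs $\mathcal{O}(m^2)$, and there are at most $n!$ leaves.
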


\section{On the density thresholds of \kVCov}

In this section, we study density bounds for \kVCov. The \emph{density} of an instance $F=\{f_1,\ldots,f_n\}$ is defined as $\mathrm{Dens}(F)=\sum_{i=1}^n 1/f_i$.
We use the following terminology (cf.~\cite{ICALP_kVisits}):
\begin{itemize}
    \item The maximum density below which no instance of a problem is schedulable is the \emph{lower density threshold} of that problem.
    \item The minimum density above which all instances of a problem are schedulable is the \emph{upper density threshold} of that problem.
\end{itemize}

Intuitively, increasing the density of an instance makes it easier to schedule in the \emph{covering} setting, but harder in the \emph{packing} setting. As such, the definitions of lower and upper density thresholds presented here are, in a sense, reversed, when compared to the analogous definitions in the packing setting (e.g.,~\cite{ICALP_kVisits}).

We remark that for \PWCov\ the lower density threshold is $1$ (trivial, cf.~\cite{Kawamura_pinwheel_cover}) and the upper density threshold is $\sum_{i=0}^\infty 1/(2^i+1)\approx 1.264$~\cite{Kawamura_Kobayashi_density_covering,Patrolling_SODA_2026}. Both of these values are tight. Here, our goal is to investigate analogous thresholds for \kVCov.

\subsection{Lower density threshold}

We prove that the lower density threshold of $1$ known for \PWCov~\cite{Kawamura_pinwheel_cover} also holds for \kVCov, for all $k\geq 2$. In contrast to the infinite version, this is not immediately obvious for the finite one. We will bound the sum of frequencies for schedulable instances and utilize the arithmetic-harmonic mean inequality to obtain a bound for the density.

\begin{theorem}\label{thrm:lower_density_threshold}
    The lower density threshold of \kVCov\ is $1$ for all $k\geq 2$.
\end{theorem}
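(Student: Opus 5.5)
Two things have to be shown: every schedulable \kVCov\ instance has density at least $1$, and this bound is tight, meaning there are schedulable instances of density exactly $1$ (or arbitrarily close to it). For tightness I would take $F=\{n,n,\ldots,n\}$ with $n$ copies of $n$. Its density is $1$, and it is scheduled by repeating the permutation $1,2,\ldots,n$ $k$ times, so that consecutive visits of each $i$ are exactly $n$ apart. Hence the threshold cannot be larger than $1$.

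For the lower bound, fix a feasible schedule of length $kn$ and let $p_{i,1}<\cdots<p_{i,k}$ be the positions of the visits of $i$. Feasibility gives $p_{i,j+1}-p_{i,j}\geq f_i$, and telescoping yields $(k-1)f_i\leq p_{i,k}-p_{i,1}$. Summing over $i$ gives
\[(k-1)\sum_{i} f_i \leq \sum_i p_{i,k} - \sum_i p_{i,1}.\]
The last visits occupy $n$ distinct positions in $[kn]$, so their sum is at most $\sum_{t=(k-1)n+1}^{kn} t$. The first visits occupy $n$ distinct positions, so their sum is at least $\sum_{t=1}^{n} t$. The difference is exactly $n\cdot(k-1)n$, hence $\sum_i f_i\leq n^2$. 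This is a generalization of the disconnection idea of Lemma~\ref{lem:disconnection_property}, but we only need the extremal sums and never an actual rearrangement.

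Next I apply the arithmetic--harmonic mean inequality. From $\sum_i f_i/n \geq n/\sum_i (1/f_i)$ we get $\mathrm{Dens}(F)\geq n^2/\sum_i f_i\geq 1$. So every schedulable instance has density at least $1$, equivalently every instance with density below $1$ is unschedulable. Combined with the tightness example, the lower density threshold is exactly $1$.

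The one step that needs care is justifying the bound on $\sum_i p_{i,k}-\sum_i p_{i,1}$. The sets of first and last visits are disjoint when $k\geq2$, though the bound only uses that each set consists of distinct positions. We must also use $k\geq 2$ so that dividing by $k-1$ is valid. The rest is routine.
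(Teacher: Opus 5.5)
Your proof is correct and follows essentially the same argument as the paper. You use the extremal sums of first and last visit positions to get $\sum_i(\mathrm{last}(i)-\mathrm{first}(i))\le (k-1)n^2$, deduce $\sum_i f_i\le n^2$, and conclude with the arithmetic--harmonic mean inequality. The only difference is the tightness witness: the paper uses the instance $\{1\}$, whereas your $n$ copies of $n$ (which reduces to $\{1\}$ when $n=1$) shows that density exactly $1$ is attainable for every $n$.
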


\begin{proof}
    Let $F=\{f_1,\ldots,f_n\}$ be a schedulable \kVCov\ instance ($k\geq 2$).
    For schedule~$S$ of~$F$, define $\mathrm{dist}(i)$ as the distance between the first and last occurrence of $i\in [n]$ in $S$. We aim to bound the quantity $\sum_{i\in [n]} \mathrm{dist}(i)$. Observe that this quantity is trivially bounded by $kn^2$, however, we require a slightly better bound for our proof. Let $\mathrm{first}(i)$ and $\mathrm{last}(i)$ be the positions of the first and last occurrence of $i$ in $S$. We have
    \[\sum_{i\in [n]} \mathrm{first}(i) \geq 1+2+\ldots+n,\quad \sum_{i\in [n]} \mathrm{last}(i) \leq kn+(kn-1)+\ldots+(kn-(n-1)).\]
    We obtain
    \begin{equation}\label{eq:distance_bound}
        \sum_{i\in [n]} \mathrm{dist}(i) \leq \sum_{i\in [n]} (\mathrm{last}(i)- \mathrm{first}(i))\leq (k-1)n^2.
    \end{equation}
    Now, observe that $\mathrm{dist}(i)\geq (k-1)f_i$, by definition. Combining this with \eqref{eq:distance_bound}, we obtain
    \begin{equation}\label{eq:frequency_sum_bound}
        \sum_{i\in [n]} f_i \leq n^2.
    \end{equation}
    By the arithmetic-harmonic mean inequality it holds that
    \begin{equation}\label{eq:AM_HM}
        \sum_{i\in [n]} f_i \geq \frac{n^2}{\sum_{i\in [n]} 1/f_i} = \frac{n^2}{\mathrm{Dens}(F)}.
    \end{equation}
    From \eqref{eq:frequency_sum_bound} and \eqref{eq:AM_HM}, we obtain $\mathrm{Dens}(F)\geq 1$. Thus, if $\mathrm{Dens}(F)< 1$, then $F$ is unschedulable.     

    The instance $\{1\}$ trivially admits a \kVCov\ schedule for every $k\in \mathbb{N}$, therefore the threshold of $1$ that we proved above is tight for all $k\geq 2$.
\end{proof}

\subsection{Upper density threshold}

Observe that every instance containing a frequency $f \geq nk$ cannot admit a \kVCov\ schedule, for every $k\geq 2$. As such, $\{1,1,\ldots,1,nk\}$ (with $n-1$ copies of $1$) is an unschedulable instance of \kVCov\ with arbitrarily large density (as $n$ increases). Thus, we obtain the following.

\begin{theorem}\label{thrm:upper_density_threshold}
    No upper density threshold exists for \kVCov\ for any $k\geq 2$.
\end{theorem}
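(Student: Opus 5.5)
The statement to prove is that, for every fixed $k\geq 2$ and every real $D$, there is an unschedulable \kVCov\ instance with density larger than $D$. The plan is to exhibit, for each $k\geq 2$, an explicit family of unschedulable instances whose density grows without bound.

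The argument has two steps, taken in order. First, establish the obstruction: any instance containing a frequency $f\geq nk$ is unschedulable. Second, attach that obstruction to an instance with arbitrarily many frequency-$1$ tasks.

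For the obstruction, suppose some task $i$ has $f_i\geq nk$ and $k\geq 2$. In any schedule of length $nk$, the first and last visits of $i$ occupy positions $p<q$ with $q-p\leq nk-1$. Since the two visits are distinct, the window $[p,q]$ has length $q-p+1\leq nk\leq f_i$. Its $f_i$-window therefore contains two visits of $i$, which violates the covering condition. So no valid schedule exists.

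For the family, take $F_n=\{1,1,\ldots,1,nk\}$, with $n-1$ copies of $1$ and a single frequency $nk$. By the obstruction, $F_n$ is unschedulable for every $n$. Its density is
\[\mathrm{Dens}(F_n)=(n-1)+\frac{1}{nk},\]
which tends to infinity as $n$ grows. Given any candidate threshold $D$, choosing $n>D+1$ yields an unschedulable instance with density above $D$. Hence no upper density threshold exists.

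No step here is a real obstacle. The only point needing care is the windowing argument: we must check that "every $f_i$ consecutive entries contain at most one visit" fails because the entries between the two visits fit inside a single window of $f_i$ consecutive positions. Near the end of the schedule such a window may run past position $nk$. This is handled either by noting that the condition is meant as the distance constraint $q-p\geq f_i$ (the reading used in Definition~\ref{def:twoVCov} for $k=2$), or by choosing the window to start at $p$, so that it covers $[p,\,p+f_i-1]\supseteq[p,q]$.
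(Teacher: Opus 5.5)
Your proposal is correct and is essentially the paper's own argument. The paper likewise observes that any frequency $f\geq nk$ makes the instance unschedulable, and uses $\{1,1,\ldots,1,nk\}$ with $n-1$ copies of $1$, whose density $(n-1)+1/(nk)$ is unbounded in $n$. For this particular family the window subtlety you flag does not arise: with $f=nk$ the whole schedule is itself a window of $nk$ consecutive entries, and it contains $k\geq 2$ visits of that task. This avoids your second option, where a window starting at $p$ could run past position $nk$.
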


A natural question now is whether an upper density threshold would exist if we prevented the trivial and arguably meaningless edge case of some frequency being large enough to make the instance unschedulable as above. However, we can prove that this is not the case. For example, for $k=2$, the instance $F=\{1,3,5,\ldots,2n-1\}$ admits the schedule $2n-1,\ldots,3,1,1,3,\ldots,2n-1$ and has arbitrarily large density (due to the harmonic series diverging). Observe that increasing any frequency in $F$ by $1$ makes the instance unschedulable, giving an alternative proof for Theorem~\ref{thrm:upper_density_threshold} (for $k=2$). It is straightforward to extend a similar argument to all $k\geq 2$.

\section{Conclusion}

As in the packing setting, the most important question arising from this work is whether some finite variant of \PWCov\ can be used to transfer hardness results to the infinite version. We already showed that this is the case with \VarPWCov\ being strongly NP-hard. Note that finite versions may also prove useful for PSPACE-hardness proofs, since a standard method to prove PSPACE-hardness relies on modifying the NP-hardness proof of a finite version (cf.~\cite{Los_Alamos_periodic_PSPACE,PSPACE_dynamic,Papadimitriou_book}), although it remains unclear whether this can be applied to perpetual scheduling variants.

Another natural open question is whether the disconnection property (Lemma~\ref{lem:disconnection_property}) we showed for \twoVCov\ generalizes to \kVCov\ ($k\geq 3$). If this is the case, then it may also be possible to generalize the algorithms we showed for \twoVCov\ with constant amounts of distinct frequencies. We have no strong indication that the disconnection property cannot be generalized, other than the fact that our proof for Lemma~\ref{lem:disconnection_property} does not appear to apply for $k\geq 3$. Interestingly, the analogous disconnection property of \twoVPack~\cite{kVisits} has been disproved for \threeVPack~\cite{ICALP_kVisits}.

We remark that, in contrast to \twoVPack~\cite{kVisits}, \twoVCov\ does not seem to be tractable when all input numbers are distinct. Our reduction in this paper does not immediately prove this, since \RNTDMshort\ is not known to be NP-hard with distinct numbers (to the best of our knowledge). Regardless, it would be surprising if \RNTDMshort\ is tractable for distinct numbers (cf.~\cite{NMTS_distinct}), hence we state the following conjecture. Note that the known NP-hardness proof for \RNTDMshort~\cite{Flow_shop_Yu} involves a complicated reduction that pads \tP\ with duplicate numbers, and modifying it to forgo these duplicates seems challenging.

\begin{conjecture}
    \twoVCov\ is strongly NP-complete even when $F$ is a simple set, contrasting \twoVPack.
\end{conjecture}

Lastly, we remark that, according to Theorem~\ref{thrm:upper_density_threshold}, \kVCov\ with $k\to \infty$ does not approach the known upper density threshold of $1.264\ldots$ for \PWCov~\cite{Kawamura_Kobayashi_density_covering,Patrolling_SODA_2026}. On the contrary, \kVPack\ is known to approach the $5/6$-threshold of \PWPack\ for $k\to \infty$~\cite{ICALP_kVisits}. We leave as a direction for future work the establishment of an alternative finite covering version that respects the upper density threshold of \PWCov.

\subsubsection*{Acknowledgments} 

The author is grateful to Aris Pagourtzis, Christos Pergaminelis, Leszek Gasieniec, Karteek Sreenivasaiah and Prudence Wong for valuable discussions regarding Pinwheel Scheduling and related problems. This work has been partially supported by project MIS 5154714 of the National Recovery and Resilience Plan Greece 2.0 funded by the European Union under the NextGenerationEU Program.

\bibliographystyle{splncs04}
\bibliography{bibliography}

@INPROCEEDINGS{Holte_Pinwheel,
  author={Holte, R. and Mok, A. and Rosier, L. and Tulchinsky, I. and Varvel, D.},
  booktitle={[1989] Proceedings of the Twenty-Second Annual Hawaii International Conference on System Sciences. Volume II: Software Track}, 
  title={The pinwheel: a real-time scheduling problem}, 
  year={1989},
  volume={2},
  number={},
  pages={693-702 vol.2},
  doi={10.1109/HICSS.1989.48075}}

@inproceedings{Gasieniec_towards_5/6,
  author       = {Leszek Gasieniec and
                  Benjamin Smith and
                  Sebastian Wild},
  editor       = {Cynthia A. Phillips and
                  Bettina Speckmann},
  title        = {Towards the 5/6-Density Conjecture of Pinwheel Scheduling},
  booktitle    = {Proceedings of the Symposium on Algorithm Engineering and Experiments,
                  {ALENEX} 2022, Alexandria, VA, USA, January 9-10, 2022},
  pages        = {91--103},
  publisher    = {{SIAM}},
  year         = {2022},
  url          = {https://doi.org/10.1137/1.9781611977042.8},
  doi          = {10.1137/1.9781611977042.8},
  bibsource    = {dblp computer science bibliography, https://dblp.org}
}

@inproceedings{Kawamura_5/6_stoc,
  author       = {Akitoshi Kawamura},
  editor       = {Bojan Mohar and
                  Igor Shinkar and
                  Ryan O'Donnell},
  title        = {Proof of the Density Threshold Conjecture for Pinwheel Scheduling},
  booktitle    = {Proceedings of the 56th Annual {ACM} Symposium on Theory of Computing,
                  {STOC} 2024, Vancouver, BC, Canada, June 24-28, 2024},
  pages        = {1816--1819},
  publisher    = {{ACM}},
  year         = {2024},
  url          = {https://doi.org/10.1145/3618260.3649757},
  doi          = {10.1145/3618260.3649757},
  bibsource    = {dblp computer science bibliography, https://dblp.org}
}

@inproceedings{Kawamura_pinwheel_cover,
  author       = {Akitoshi Kawamura and
                  Yusuke Kobayashi and
                  Yosuke Kusano},
  editor       = {Irene Finocchi and
                  Loukas Georgiadis},
  title        = {Pinwheel Covering},
  booktitle    = {Algorithms and Complexity - 14th International Conference, {CIAC}
                  2025, Rome, Italy, June 10-12, 2025, Proceedings, Part {II}},
  series       = {Lecture Notes in Computer Science},
  volume       = {15680},
  pages        = {185--199},
  publisher    = {Springer},
  year         = {2025},
  url          = {https://doi.org/10.1007/978-3-031-92935-9\_12},
  doi          = {10.1007/978-3-031-92935-9\_12},
  bibsource    = {dblp computer science bibliography, https://dblp.org}
}

@article{Feinberg_Generalized_Pinwheel,
  author       = {Eugene A. Feinberg and
                  Michael T. Curry},
  title        = {Generalized Pinwheel Problem},
  journal      = {Math. Methods Oper. Res.},
  volume       = {62},
  number       = {1},
  pages        = {99--122},
  year         = {2005},
  url          = {https://doi.org/10.1007/s00186-005-0443-4},
  doi          = {10.1007/S00186-005-0443-4},
  bibsource    = {dblp computer science bibliography, https://dblp.org}
}

@inproceedings{Bamboo_approx_1,
  author       = {Felix H{\"{o}}hne and
                  Rob van Stee},
  editor       = {Nicole Megow and
                  Adam D. Smith},
  title        = {A 10/7-Approximation for Discrete Bamboo Garden Trimming and Continuous
                  Trimming on Star Graphs},
  booktitle    = {Approximation, Randomization, and Combinatorial Optimization. Algorithms
                  and Techniques, {APPROX/RANDOM} 2023, September 11-13, 2023, Atlanta,
                  Georgia, {USA}},
  series       = {LIPIcs},
  volume       = {275},
  pages        = {16:1--16:19},
  publisher    = {Schloss Dagstuhl - Leibniz-Zentrum f{\"{u}}r Informatik},
  year         = {2023},
  url          = {https://doi.org/10.4230/LIPIcs.APPROX/RANDOM.2023.16},
  doi          = {10.4230/LIPICS.APPROX/RANDOM.2023.16},
  bibsource    = {dblp computer science bibliography, https://dblp.org}
}

@article{Bamboo_approx_2,
  author       = {Martijn van Ee},
  title        = {A 12/7-approximation algorithm for the discrete Bamboo Garden Trimming
                  problem},
  journal      = {Oper. Res. Lett.},
  volume       = {49},
  number       = {5},
  pages        = {645--649},
  year         = {2021},
  url          = {https://doi.org/10.1016/j.orl.2021.07.001},
  doi          = {10.1016/J.ORL.2021.07.001},
  bibsource    = {dblp computer science bibliography, https://dblp.org}
}

@article{Jacobs_Window_Scheduling_Complexity,
  author       = {Tobias Jacobs and
                  Salvatore Longo},
  title        = {A New Perspective on the Windows Scheduling Problem},
  journal      = {CoRR},
  volume       = {abs/1410.7237},
  year         = {2014},
  url          = {http://arxiv.org/abs/1410.7237},
  eprinttype   = {arXiv},
  eprint       = {1410.7237},
  bibsource    = {dblp computer science bibliography, https://dblp.org}
}

@article{Bamboo_second,
  author       = {Mattia D'Emidio and
                  Gabriele Di Stefano and
                  Alfredo Navarra},
  title        = {Bamboo Garden Trimming Problem: Priority Schedulings},
  journal      = {Algorithms},
  volume       = {12},
  number       = {4},
  pages        = {74},
  year         = {2019},
  url          = {https://doi.org/10.3390/a12040074},
  doi          = {10.3390/A12040074},
  bibsource    = {dblp computer science bibliography, https://dblp.org}
}

@inproceedings{Bamboo_first,
  author       = {Leszek Gasieniec and
                  Ralf Klasing and
                  Christos Levcopoulos and
                  Andrzej Lingas and
                  Jie Min and
                  Tomasz Radzik},
  editor       = {Bernhard Steffen and
                  Christel Baier and
                  Mark van den Brand and
                  Johann Eder and
                  Mike Hinchey and
                  Tiziana Margaria},
  title        = {Bamboo Garden Trimming Problem (Perpetual Maintenance of Machines
                  with Different Attendance Urgency Factors)},
  booktitle    = {{SOFSEM} 2017: Theory and Practice of Computer Science - 43rd International
                  Conference on Current Trends in Theory and Practice of Computer Science,
                  Limerick, Ireland, January 16-20, 2017, Proceedings},
  series       = {Lecture Notes in Computer Science},
  volume       = {10139},
  pages        = {229--240},
  publisher    = {Springer},
  year         = {2017},
  url          = {https://doi.org/10.1007/978-3-319-51963-0\_18},
  doi          = {10.1007/978-3-319-51963-0\_18},
  bibsource    = {dblp computer science bibliography, https://dblp.org}
}

@article{Flow_shop_Yu,
  author       = {Wenci Yu and
                  Han Hoogeveen and
                  Jan Karel Lenstra},
  title        = {Minimizing Makespan in a Two-Machine Flow Shop with Delays and Unit-Time
                  Operations is {NP}-Hard},
  journal      = {J. Sched.},
  volume       = {7},
  number       = {5},
  pages        = {333--348},
  year         = {2004},
  url          = {https://doi.org/10.1023/B:JOSH.0000036858.59787.c2},
  doi          = {10.1023/B:JOSH.0000036858.59787.C2},
  bibsource    = {dblp computer science bibliography, https://dblp.org}
}

@article{Number_of_numbers,
  author       = {Michael R. Fellows and
                  Serge Gaspers and
                  Frances A. Rosamond},
  title        = {Parameterizing by the Number of Numbers},
  journal      = {Theory Comput. Syst.},
  volume       = {50},
  number       = {4},
  pages        = {675--693},
  year         = {2012},
  url          = {https://doi.org/10.1007/s00224-011-9367-y},
  doi          = {10.1007/S00224-011-9367-Y},
  bibsource    = {dblp computer science bibliography, https://dblp.org}
}

@article{Chan_conjecture,
  author       = {Mee Yee Chan and
                  Francis Y. L. Chin},
  title        = {Schedulers for Larger Classes of Pinwheel Instances},
  journal      = {Algorithmica},
  volume       = {9},
  number       = {5},
  pages        = {425--462},
  year         = {1993},
  url          = {https://doi.org/10.1007/BF01187034},
  doi          = {10.1007/BF01187034},
  bibsource    = {dblp computer science bibliography, https://dblp.org}
}

@article{Chan_0.7,
  author       = {Mee Yee Chan and
                  Francis Y. L. Chin},
  title        = {General Schedulers for the Pinwheel Problem Based on Double-Integer
                  Reduction},
  journal      = {{IEEE} Trans. Computers},
  volume       = {41},
  number       = {6},
  pages        = {755--768},
  year         = {1992},
  url          = {https://doi.org/10.1109/12.144627},
  doi          = {10.1109/12.144627},
  bibsource    = {dblp computer science bibliography, https://dblp.org}
}

@article{Bar-Noy_0.6,
  author       = {Amotz Bar{-}Noy and
                  Richard E. Ladner and
                  Tami Tamir},
  title        = {Windows scheduling as a restricted version of bin packing},
  journal      = {{ACM} Trans. Algorithms},
  volume       = {3},
  number       = {3},
  pages        = {28},
  year         = {2007},
  url          = {https://doi.org/10.1145/1273340.1273344},
  doi          = {10.1145/1273340.1273344},
  bibsource    = {dblp computer science bibliography, https://dblp.org}
}

@article{Fishburn_density,
  author       = {Peter C. Fishburn and
                  J. C. Lagarias},
  title        = {Pinwheel Scheduling: Achievable Densities},
  journal      = {Algorithmica},
  volume       = {34},
  number       = {1},
  pages        = {14--38},
  year         = {2002},
  url          = {https://doi.org/10.1007/s00453-002-0938-9},
  doi          = {10.1007/S00453-002-0938-9},
  bibsource    = {dblp computer science bibliography, https://dblp.org}
}

@inproceedings{kVisits,
  author       = {Sotiris Kanellopoulos and
                  Christos Pergaminelis and
                  Maria Kokkou and
                  Euripides Markou and
                  Aris Pagourtzis},
  editor       = {Kasper Green Larsen and
                  Barna Saha},
  title        = {Finite Pinwheel Scheduling: the k-Visits Problem},
  booktitle    = {Proceedings of the 2026 Annual {ACM-SIAM} Symposium on Discrete Algorithms,
                  {SODA} 2026, Vancouver, BC, Canada, January 11-14, 2026},
  pages        = {355--371},
  publisher    = {{SIAM}},
  year         = {2026},
  url          = {https://doi.org/10.1137/1.9781611978971.16},
  doi          = {10.1137/1.9781611978971.16},
  bibsource    = {dblp computer science bibliography, https://dblp.org}
}

@article{NMTS_distinct,
  author      = {Heather Hulett and
                  Todd G. Will and
                  Gerhard J. Woeginger},
  title        = {Multigraph realizations of degree sequences: Maximization is easy,
                  minimization is hard},
  journal      = {Oper. Res. Lett.},
  volume       = {36},
  number       = {5},
  pages        = {594--596},
  year         = {2008},
  url          = {https://doi.org/10.1016/j.orl.2008.05.004},
  doi          = {10.1016/J.ORL.2008.05.004},
  bibsource    = {dblp computer science bibliography, https://dblp.org}
}

@article{Papadimitriou_EM,
  author       = {Christos H. Papadimitriou and
                  Mihalis Yannakakis},
  title        = {The complexity of restricted spanning tree problems},
  journal      = {J. {ACM}},
  volume       = {29},
  number       = {2},
  pages        = {285--309},
  year         = {1982},
  url          = {https://doi.org/10.1145/322307.322309},
  doi          = {10.1145/322307.322309},
  bibsource    = {dblp computer science bibliography, https://dblp.org}
}

@inproceedings{Vazirani_EM,
  author       = {Ketan Mulmuley and
                  Umesh V. Vazirani and
                  Vijay V. Vazirani},
  editor       = {Alfred V. Aho},
  title        = {Matching Is as Easy as Matrix Inversion},
  booktitle    = {Proceedings of the 19th Annual {ACM} Symposium on Theory of Computing,
                  1987, New York, New York, {USA}},
  pages        = {345--354},
  publisher    = {{ACM}},
  year         = {1987},
  url          = {https://doi.org/10.1145/28395.383347},
  doi          = {10.1145/28395.383347},
  bibsource    = {dblp computer science bibliography, https://dblp.org}
}

@article{Gurjar_EM_Complete,
  author       = {Rohit Gurjar and
                  Arpita Korwar and
                  Jochen Messner and
                  Thomas Thierauf},
  title        = {Exact Perfect Matching in Complete Graphs},
  journal      = {{ACM} Trans. Comput. Theory},
  volume       = {9},
  number       = {2},
  pages        = {8:1--8:20},
  year         = {2017},
  url          = {https://doi.org/10.1145/3041402},
  doi          = {10.1145/3041402},
  bibsource    = {dblp computer science bibliography, https://dblp.org}
}

@InProceedings{Maalouly_stacs_2022,
  author =	{El Maalouly, Nicolas},
  title =	{{Exact Matching: Algorithms and Related Problems}},
  booktitle =	{40th International Symposium on Theoretical Aspects of Computer Science (STACS 2023)},
  pages =	{29:1--29:17},
  series =	{Leibniz International Proceedings in Informatics (LIPIcs)},
  ISBN =	{978-3-95977-266-2},
  ISSN =	{1868-8969},
  year =	{2023},
  volume =	{254},
  editor =	{Berenbrink, Petra and Bouyer, Patricia and Dawar, Anuj and Kant\'{e}, Mamadou Moustapha},
  publisher =	{Schloss Dagstuhl -- Leibniz-Zentrum f{\"u}r Informatik},
  address =	{Dagstuhl, Germany},
  URL =		{https://drops.dagstuhl.de/entities/document/10.4230/LIPIcs.STACS.2023.29},
  URN =		{urn:nbn:de:0030-drops-176811},
  doi =		{10.4230/LIPIcs.STACS.2023.29}
}

@InProceedings{Maalouly_esa_2025,
  author =	{El Maalouly, Nicolas and Haslebacher, Sebastian and Taubner, Adrian and Wulf, Lasse},
  title =	{{On Finding l-Th Smallest Perfect Matchings}},
  booktitle =	{33rd Annual European Symposium on Algorithms (ESA 2025)},
  pages =	{19:1--19:15},
  series =	{Leibniz International Proceedings in Informatics (LIPIcs)},
  ISBN =	{978-3-95977-395-9},
  ISSN =	{1868-8969},
  year =	{2025},
  volume =	{351},
  editor =	{Benoit, Anne and Kaplan, Haim and Wild, Sebastian and Herman, Grzegorz},
  publisher =	{Schloss Dagstuhl -- Leibniz-Zentrum f{\"u}r Informatik},
  address =	{Dagstuhl, Germany},
  URL =		{https://drops.dagstuhl.de/entities/document/10.4230/LIPIcs.ESA.2025.19},
  URN =		{urn:nbn:de:0030-drops-244875},
  doi =		{10.4230/LIPIcs.ESA.2025.19}
}

@inproceedings{Pinwheel_ISAAC_2025,
  author       = {Yusuke Kobayashi and
                  Bingkai Lin},
  editor       = {Ho{-}Lin Chen and
                  Wing{-}Kai Hon and
                  Meng{-}Tsung Tsai},
  title        = {Hardness and Fixed Parameter Tractability for Pinwheel Scheduling
                  Problems},
  booktitle    = {36th International Symposium on Algorithms and Computation, {ISAAC}
                  2025, Tainan, Taiwan, December 7-10, 2025},
  series       = {LIPIcs},
  volume       = {359},
  pages        = {47:1--47:15},
  publisher    = {Schloss Dagstuhl - Leibniz-Zentrum f{\"{u}}r Informatik},
  year         = {2025},
  url          = {https://doi.org/10.4230/LIPIcs.ISAAC.2025.47},
  doi          = {10.4230/LIPICS.ISAAC.2025.47},
  bibsource    = {dblp computer science bibliography, https://dblp.org}
}

@inproceedings{Patrolling_SODA_2026,
  title={An Optimal Density Bound for Discretized Point Patrolling},
  author={Mishra, Ahan},
  booktitle={Proceedings of the 2026 Annual ACM-SIAM Symposium on Discrete Algorithms (SODA)},
  pages={2846--2875},
  year={2026},
  organization={SIAM},
  URL = {https://epubs.siam.org/doi/abs/10.1137/1.9781611978971.105}
}

@inproceedings{PSPACE_UAV,
  author       = {Hsi{-}Ming Ho and
                  Jo{\"{e}}l Ouaknine},
  editor       = {Andrew M. Pitts},
  title        = {The Cyclic-Routing {UAV} Problem is PSPACE-Complete},
  booktitle    = {Foundations of Software Science and Computation Structures - 18th
                  International Conference, FoSSaCS 2015, Held as Part of the European
                  Joint Conferences on Theory and Practice of Software, {ETAPS} 2015,
                  London, UK, April 11-18, 2015. Proceedings},
  series       = {Lecture Notes in Computer Science},
  volume       = {9034},
  pages        = {328--342},
  publisher    = {Springer},
  year         = {2015},
  url          = {https://doi.org/10.1007/978-3-662-46678-0\_21},
  doi          = {10.1007/978-3-662-46678-0\_21},
  bibsource    = {dblp computer science bibliography, https://dblp.org}
}

@inproceedings{Los_Alamos_periodic_PSPACE,
  author       = {Madhav V. Marathe and
                  Harry B. Hunt III and
                  Daniel J. Rosenkrantz and
                  Richard Edwin Stearns},
  title        = {Theory of Periodically Specified Problems: Complexity and Approximability},
  booktitle    = {Proceedings of the 13th Annual {IEEE} Conference on Computational
                  Complexity, Buffalo, New York, USA, June 15-18, 1998},
  pages        = {106},
  publisher    = {{IEEE} Computer Society},
  year         = {1998},
  url          = {https://doi.org/10.1109/CCC.1998.694596},
  doi          = {10.1109/CCC.1998.694596},
  bibsource    = {dblp computer science bibliography, https://dblp.org}
}

@book{Papadimitriou_book,
  author       = {Christos H. Papadimitriou},
  title        = {Computational complexity},
  publisher    = {Addison-Wesley},
  year         = {1994},
  isbn         = {978-0-201-53082-7},
  bibsource    = {dblp computer science bibliography, https://dblp.org}
}

@article{EWPM_to_EM_reduction,
  author       = {Rohit Gurjar and
                  Arpita Korwar and
                  Jochen Messner and
                  Simon Straub and
                  Thomas Thierauf},
  title        = {Planarizing Gadgets for Perfect Matching Do Not Exist},
  journal      = {{ACM} Trans. Comput. Theory},
  volume       = {8},
  number       = {4},
  pages        = {14:1--14:15},
  year         = {2016},
  url          = {https://doi.org/10.1145/2934310},
  doi          = {10.1145/2934310},
  bibsource    = {dblp computer science bibliography, https://dblp.org}
}

@misc{kleinberg_mishra_NP_hardness,
      title={{NP}-Hardness and a {PTAS} for the Pinwheel Problem}, 
      author={Robert Kleinberg and Ahan Mishra},
      year={2026},
      eprint={2604.13974},
      archivePrefix={arXiv},
      primaryClass={cs.DS},
      url={https://arxiv.org/abs/2604.13974}, 
}

@inproceedings{ICALP_kVisits,
  author       = {Sotiris Kanellopoulos and
                  Giorgos Mitropoulos and
                  Christos Pergaminelis and
                  Thanos Tolias},
  editor       = {Sayan Bhattacharya and
                  Danupon Nanongkai and
                  Michael Benedikt and
                  Gabriele Puppis},
  title        = {Hardness, Tractability and Density Thresholds of Finite Pinwheel Scheduling
                  Variants},
  booktitle    = {53rd International Colloquium on Automata, Languages, and Programming,
                  {ICALP} 2026, Royal Holloway, University of London, Egham, United
                  Kingdom, July 7-10, 2026},
  series       = {LIPIcs},
  volume       = {374},
  pages        = {122:1--122:23},
  publisher    = {Schloss Dagstuhl - Leibniz-Zentrum f{\"{u}}r Informatik},
  year         = {2026},
  url          = {https://doi.org/10.4230/LIPIcs.ICALP.2026.122},
  doi          = {10.4230/LIPICS.ICALP.2026.122},
  bibsource    = {dblp computer science bibliography, https://dblp.org}
}

@misc{Kawamura_Kobayashi_density_covering,
      title={A Computer-Assisted Proof of the Optimal Density Bound for Pinwheel Covering}, 
      author={Akitoshi Kawamura and Yusuke Kobayashi},
      year={2025},
      eprint={2510.06533},
      archivePrefix={arXiv},
      primaryClass={cs.DM},
      url={https://arxiv.org/abs/2510.06533}, 
}

@article{Kawamura_Soejima_point_patrolling,
  author       = {Akitoshi Kawamura and
                  Makoto Soejima},
  title        = {Simple strategies versus optimal schedules in multi-agent patrolling},
  journal      = {Theor. Comput. Sci.},
  volume       = {839},
  pages        = {195--206},
  year         = {2020},
  url          = {https://doi.org/10.1016/j.tcs.2020.07.037},
  doi          = {10.1016/J.TCS.2020.07.037},
  bibsource    = {dblp computer science bibliography, https://dblp.org}
}

@inproceedings{SOFSEM_Kusano_pinwheel_durations,
  author       = {Yosuke Kusano},
  editor       = {Jakub Kozik and
                  Alexander Wolff},
  title        = {Limitations of Density-Based Heuristics and an Alternative Approach
                  for Pinwheel Scheduling with Durations},
  booktitle    = {{SOFSEM} 2026: Theory and Practice of Computer Science - 51st International
                  Conference on Current Trends in Theory and Practice of Computer Science,
                  {SOFSEM} 2026, Krak{\'{o}}w, Poland, February 9-13, 2026, Proceedings},
  series       = {Lecture Notes in Computer Science},
  pages        = {634--647},
  publisher    = {Springer},
  year         = {2026},
  url          = {https://doi.org/10.1007/978-3-032-17801-5\_46},
  doi          = {10.1007/978-3-032-17801-5\_46},
  bibsource    = {dblp computer science bibliography, https://dblp.org}
}

@inproceedings{SOFSEM_real_periods,
  author       = {Hiroshi Fujiwara and
                  Kota Miyagi and
                  Katsuhisa Ouchi},
  editor       = {Jakub Kozik and
                  Alexander Wolff},
  title        = {Pinwheel Scheduling with Real Periods},
  booktitle    = {{SOFSEM} 2026: Theory and Practice of Computer Science - 51st International
                  Conference on Current Trends in Theory and Practice of Computer Science,
                  {SOFSEM} 2026, Krak{\'{o}}w, Poland, February 9-13, 2026, Proceedings},
  series       = {Lecture Notes in Computer Science},
  pages        = {621--633},
  publisher    = {Springer},
  year         = {2026},
  url          = {https://doi.org/10.1007/978-3-032-17801-5\_45},
  doi          = {10.1007/978-3-032-17801-5\_45},
  bibsource    = {dblp computer science bibliography, https://dblp.org}
}

@InProceedings{Schewior_Combinatorial_Perpetual_Scheduling_ICALP,
  author =	{Mendoza-Cadena, Mirabel and Merino, Arturo and Nielsen, Mads Anker and Schewior, Kevin},
  title =	{{Combinatorial Perpetual Scheduling: Existence and Computation of Low-Height Schedules}},
  booktitle =	{53rd International Colloquium on Automata, Languages, and Programming (ICALP 2026)},
  pages =	{142:1--142:23},
  series =	{Leibniz International Proceedings in Informatics (LIPIcs)},
  ISBN =	{978-3-95977-428-4},
  ISSN =	{1868-8969},
  year =	{2026},
  volume =	{374},
  editor =	{Bhattacharya, Sayan and Nanongkai, Danupon and Benedikt, Michael and Puppis, Gabriele},
  publisher =	{Schloss Dagstuhl -- Leibniz-Zentrum f{\"u}r Informatik},
  address =	{Dagstuhl, Germany},
  URL =		{https://drops.dagstuhl.de/entities/document/10.4230/LIPIcs.ICALP.2026.142},
  URN =		{urn:nbn:de:0030-drops-265319},
  doi =		{10.4230/LIPIcs.ICALP.2026.142}
}

@inproceedings{PSPACE_dynamic,
  author       = {James B. Orlin},
  editor       = {Ronald L. Rivest and
                  Georgia I. Davida and
                  Walter A. Burkhard and
                  Richard J. Lipton},
  title        = {The Complexity of Dynamic Languages and Dynamic Optimization Problems},
  booktitle    = {Proceedings of the 13th Annual {ACM} Symposium on Theory of Computing,
                  May 11-13, 1981, Milwaukee, Wisconsin, {USA}},
  pages        = {218--227},
  publisher    = {{ACM}},
  year         = {1981},
  url          = {https://doi.org/10.1145/800076.802475},
  doi          = {10.1145/800076.802475},
  bibsource    = {dblp computer science bibliography, https://dblp.org}
}

@inproceedings{telephone_broadcasting_NMTS,
  author       = {Yudai Egami and
                  Tatsuya Gima and
                  Tesshu Hanaka and
                  Yasuaki Kobayashi and
                  Michael Lampis and
                  Valia Mitsou and
                  Edouard Nemery and
                  Yota Otachi and
                  Manolis Vasilakis and
                  Daniel Vaz},
  editor       = {Pawel Gawrychowski and
                  Filip Mazowiecki and
                  Michal Skrzypczak},
  title        = {Broadcasting Under Structural Restrictions},
  booktitle    = {50th International Symposium on Mathematical Foundations of Computer
                  Science, {MFCS} 2025, Warsaw, Poland, August 25-29, 2025},
  series       = {LIPIcs},
  volume       = {345},
  pages        = {42:1--42:18},
  publisher    = {Schloss Dagstuhl - Leibniz-Zentrum f{\"{u}}r Informatik},
  year         = {2025},
  url          = {https://doi.org/10.4230/LIPIcs.MFCS.2025.42},
  doi          = {10.4230/LIPICS.MFCS.2025.42},
  bibsource    = {dblp computer science bibliography, https://dblp.org}
}

@misc{temporal_path_cover,
      title={Temporal Path Covers: Dilworth Properties and Parameterized Complexity}, 
      author={Lapo Cioni and Sotiris Kanellopoulos and Edouard Nemery and Aris Pagourtzis and Christos Pergaminelis and Manolis Vasilakis},
      year={2026},
      eprint={2607.00118},
      archivePrefix={arXiv},
      primaryClass={cs.DS},
      url={https://arxiv.org/abs/2607.00118}, 
}


\newpage

\appendix

\end{document}